\definecolor{ltblue}{rgb}{0,0.4,0.4}
\definecolor{dkblue}{rgb}{0,0.1,0.6}
\definecolor{dkgreen}{rgb}{0,0.35,0}
\definecolor{dkviolet}{rgb}{0.3,0,0.5}
\definecolor{dkred}{rgb}{0.5,0,0}
\definecolor{bggray}{gray}{0.95}
\newcommand{\icode}[1]{\lstinline[mathescape,basicstyle=\footnotesize]!#1!}
\newtheorem{theorem}{Theorem}
\newtheorem{lemma}{Lemma}
\newtheorem{corollary}{Corollary}
\newtheorem{definition}{Definition}
\newcommand\oak[1]{{\color{dkblue}{#1}}}
\newcommand\mcoq[1]{{\textcolor{dkgreen}{#1}}}
\reservestyle{\command}{\mathtt}
\newcommand\lift[1]{\uparrow_1#1}
\newcommand\liftn[2]{\uparrow_#1#2}
\newcommand\branchname{\mathtt{branch}}
\newcommand\branch[3]{\branchname(#1,#2,#3)}
\newcommand\find[2]{\mathtt{find}(#1,#2)}
\newcommand\ctorargs[1]{\mathtt{ctor\_args}(#1)}
\newcommand\ctorname[1]{\mathtt{ctor\_name}(#1)}
\newcommand\resolveindname{\mathtt{resolve\_ind}}
\newcommand\resolveind[2]{\resolveindname(#1,#2)}
\newcommand\resolvectorname{\mathtt{resolve\_ctor}}
\newcommand\resolvector[3]{\resolvectorname(#1,#2,#3)}
\newcommand\fromvalname{\ensuremath\mathtt{of\_val}}
\newcommand\fromval[1]{\fromvalname(#1)}
\newcommand\expT[2]{\llbracket#1\rrbracket^t_{#2}}
\newcommand\tyT[1]{\llbracket#1\rrbracket^T}
\newcommand\defn{\overset{def}{=}}
\newcommand\type[1]{\ensuremath\mathtt{#1}}
\newcommand\evaloak[3]{\ensuremath\mathtt{eval}_{#1}^{#2}(#3)}
\newcommand\evaltypename{\ensuremath\mathtt{eval\_ty}}
\newcommand\matchpatname{\ensuremath\mathtt{match\_pat}}
\newcommand\validatename{\ensuremath\mathtt{validate}}
\newcommand\validatebranchesname{\ensuremath\mathtt{validate\_branches}}
\newcommand\vClosLam[4]{\ensuremath\mathtt{vClosLam}(#1,#2,#3,#4)}
\newcommand\vClosFix[6]{\ensuremath\mathtt{vClosFix}(#1,#2,#3,#4,#5,#6)}
\newcommand\vConstr[3]{\ensuremath\mathtt{vConstr}(#1,#2,#3)}
\newcommand\vTy[1]{\ensuremath\mathtt{vTy}(#1)}
\newcommand\vTyClos[3]{\ensuremath\mathtt{vTyClos}(#1,#2,#3)}
\newcommand\id[1]{\ensuremath\mathit{#1}}
\newcommand\map[2]{\ensuremath\mathtt{map}~#1~#2}
\newcommand\acorn{\textsc{Acorn}}
\newcommand\oaklang{\textsc{Oak}}
\newcommand\acornkernel{\ensuremath{\lambda_\mathrm{smart}}}
\newcommand\mcoqsubst[2]{#1\big\{#2\big\}}
\newcommand\vect[1]{\overrightarrow{#1}}
\title{ConCert: A Smart Contract Certification Framework in Coq}
\titlerunning{ConCert: A Smart Contract Certification Framework in Coq}
\author{Danil Annenkov \and Jakob Botsch Nielsen \and Bas Spitters}
\institute{Concordium Blockchain Research Center, Aarhus University}
\authorrunning{Annenkov, Nielsen, Spitters}
\begin{document}
\maketitle

\begin{abstract}
  We present a new way of embedding functional languages into
  the Coq proof assistant by using meta-programming. This allows
  us to develop the meta-theory of the language using the
  deep embedding and provides a convenient way for reasoning about concrete
  programs using the shallow embedding.
  We connect the deep and the shallow embeddings by a
  soundness theorem.
  As an instance of our approach, we develop an embedding of a core
  smart contract language into Coq and verify several important
  properties of a crowdfunding contract based on a previous
  formalisation of smart contract execution in blockchains.
\end{abstract}

\section{Introduction}\label{sec:intro}
The concept of blockchain-based smart contracts has evolved in several ways
since its appearance. Starting from the restricted and non-Turing-complete
Bitcoin script\footnote{Bitcoin: A peer-to-peer electronic cash
  system.\\ \url{https://bitcoin.org/bitcoin.pdf} } designed to validate
transactions, the idea of smart contracts expanded to fully-featured languages
such as Solidity running on the Ethereum Virtual Machine (EVM).\footnote{Ethereum's
  white paper:\\ \url{https://github.com/ethereum/wiki/wiki/White-Paper}} Recent
research on smart contract verification discovered the presence of multiple
vulnerabilities in many smart contracts written in
Solidity~\cite{Luu:2016,Sergey:ConcurrentPerspective}. Several times the issues
in smart contract implementations resulted in huge financial losses (for
example, the DAO contract and the Parity multi-sig wallet on Ethereum). The
setup for smart contracts is unique: once deployed, they cannot be changed and
any small mistake in the contract logic may lead to serious financial
consequences. This shows not only the importance of formal verification of smart
contracts but also the importance of principled programming language
design. The third generation smart contract languages tend to employ the functional
programming paradigm. A number of blockchain implementations have already
adopted certain variations of functional languages as a smart
contract language. These languages range from minimalistic and low-level
(Simplicity~\cite{O'Connor:Simplicity},
Michelson\footnote{\url{https://www.michelson-lang.com/}}),
intermediate (Scilla~\cite{Sergey:Scilla}) to fully-featured
OCaml- and Haskell-like languages (Liquidity~\cite{Liquidity},
Plutus~\cite{Chapman:PlutusCore,Jones:UnravelingRecursion}).
There is a very good reason for this tendency. Statically typed functional
programming languages can rule out many errors. Moreover, due to the absence
(or more precise control) of side effects programs in functional languages
behave like mathematical functions, which facilitates reasoning about
them. However, one cannot hope to perform only stateless computations: the state
is inherent for blockchains. One way to approach this is to limit the ways of
changing the state. While Solidity allows arbitrary state modifications at any
point of execution, many modern smart contract languages represent smart
contract execution as a function from a current state to a new state. This
functional nature of modern smart contract languages makes them well-suited for
formal reasoning.

The Ethereum Virtual Machine and the Solidity smart contract language remain one
of the most used platforms for writing smart contacts. Due to the permissiveness
of the underlying execution model and the complexity of the language,
verification in this setting is quite challenging. On the other hand, many
modern languages such as \acorn{},\footnote{The \acorn{} language is an ML-style
  functional smart contract language currently under development at the
  Concordium Foundation.} Liquidity and Scilla, offer a different execution
model and a type system allowing to rule out many errors through type
checking. Of course, many important properties are not possible to capture even
with powerful type systems of functional smart contract languages. For that
reason, to provide even higher guarantees, such as functional correctness, one
has to resort to stronger type systems/\-logics for reasoning about programs and
employ deductive verification techniques. Among various tools for that purpose
proof assistants provide a versatile solution for that problem.

Proof assistants or interactive theorem provers are tools that allow users to
state and prove theorems interactively. Proof assistants often
offer some degree of proof automation by implementing decision and semi-decision
procedures, or interacting with automated theorem provers (SAT and SMT
solvers). Some proof assistants allow for writing user-defined automation
scripts, or write extensions using a plug-in system. This is especially
important, since many properties of programs are undecidable and providing
users with a convenient way of interactive proving while retaining a possibility
to do automatic reasoning makes proof assistants very flexible tools for
verification of smart contracts.

Existing formalisations of functional smart contract languages mostly
focus on meta-theory (Plutus~\cite{Chapman:PlutusCore},
Simplicity~\cite{O'Connor:Simplicity}) or meta-theory and verification
using the deep embedding (Michelson~\cite{Mi-Cho-Coq}). An exception
is Scilla~\cite{Sergey:Scilla}, which features verification of
particular smart contracts in the Coq proof assistant by means of
shallow embedding by hand. Simplicity~\cite{O'Connor:Simplicity} is a
low-level combinator-based functional language and its formalisation
allows for translating from deep to shallow embeddings for purposes of
meta-theoretic reasoning. None of these developments combines deep and
shallow embeddings for a \emph{high-level} functional smart contract
language in one framework or provide an automatic way of converting
smart contracts to Coq programs for convenient verification of
concrete smart contracts. We are making a step towards this direction
by allowing for deep and shallow embeddings to coexist and interact in
Coq.

The contributions of this paper are the following:
\begin{enumerate}
\item We develop an approach to verify properties of functional
  programming languages and of individual programs in one framework. In
  particular, this approach works for functional smart contract
  languages and concrete contracts.
\item We describe a novel way of combining deep and shallow embeddings using
  the meta-programming facilities of Coq (MetaCoq~\cite{Anand:TemplateCoq}).
\item As an instance of our approach, we define the syntax and semantics of
  \acornkernel{} --- a core subset of the \acorn{} language (the deep embedding)
  and the corresponding translation of \acornkernel{} programs into Coq
  functions (the shallow embedding).
\item We prove properties of a crowdfunding contract given as a deep
  embedding (abstract syntax tree) of a \acornkernel{} program.
\item We integrate our shallow embedding with the smart contract execution
  framework~\cite{Interactions} allowing for proving safety and temporal properties
  of interacting smart contracts.
\end{enumerate}
We discuss the details of our approach in Section~\ref{sec:approach},
provide an example of a crowdfunding contract verification in
Section~\ref{sec:crowdfunding}. In Section~\ref{sec:oak-stdlib} we
apply our framework to verify a \texttt{List} module of the \acorn{}
standard library and discuss how our development integrates with the
execution framework in Section~\ref{sec:exec-framework}. Theorems from
Section~\ref{sec:soundness} and lemmas from
Sections~\ref{sec:crowdfunding},~\ref{sec:oak-stdlib} and~\ref{sec:exec-framework} are
proved in our Coq development and available at \url{https://github.com/AU-COBRA/ConCert/tree/artefact}.

\begin{figure*}
  \includegraphics[height=3cm]{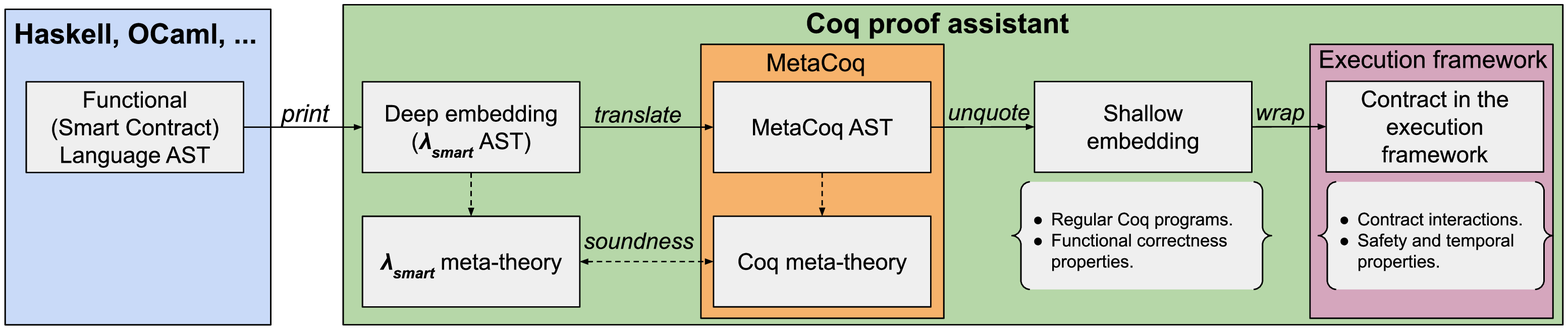}
  \caption{The structure of the framework}\label{fig:structure}
\end{figure*}

\section{Our Approach}\label{sec:approach}
There are various ways of reasoning about properties of a functional programming
language in a proof assistant. First, let us split the properties into two groups:
meta-theoretical properties (properties of a language itself) and properties of
programs written in the language. Since we are focused on functional smart
contract languages and many proof assistants come with a built-in functional
language, it is reasonable to assume that we can reuse the programming language
of a proof assistant to express smart contracts and reason about their
properties. A somewhat similar approach is taken by the authors of the hs-to-coq
library~\cite{Spector-Zabusky:TotalHaskell}, which translates total Haskell
programs to Coq by means of source-to-source transformation. Unfortunately, in
this case, it is impossible to reason about the correctness of the translation.

We would like to have two representations of functional programs within the same
framework: a deep embedding in the form of an abstract syntax tree (AST), and a
shallow embedding as a Coq function. While the deep embedding is suitable for
meta-theoretical reasoning, the shallow embedding is convenient for proving
properties of concrete programs. We use the meta-programming facilities of the
MetaCoq plug-in~\cite{Anand:TemplateCoq} to connect the two ways of reasoning
about functional programs.

The overview of the structure of the framework is given in
Figure~\ref{fig:structure}. As opposed to source-to-source translations in the
style of hs-to-coq~\cite{Spector-Zabusky:TotalHaskell} and
coq-of-ocaml\footnote{The coq-of-ocaml project page:
  \url{https://github.com/clarus/coq-of-ocaml}} we would like for all the
non-trivial transformations to happen in Coq. This makes it possible to reason
within Coq about the translation and formalise the required meta-theory for the
language. That is, we start with an AST of a program in a functional
language implemented in Haskell, OCaml or some other language, then we generate
an AST represented using the constructors of the corresponding inductive type in
Coq (deep embedding) by printing the desugared AST of the program. By printing
we mean a recursive procedure that converts the AST into a string consisting of
the constructors of our Coq representation. The main idea is that this procedure
should be as simple as possible and does not involve any non-trivial
manipulations since it will be part of a trusted code base. If non-trivial
transformations are required, they should happen within the Coq
implementation.



\subsection{MetaCoq}
The MetaCoq project~\cite{Anand:TemplateCoq} brings together several subprojects
united by the use of meta-programming and formalisation of Coq's meta-theory in Coq.
In particular, relevant for this project:
\begin{itemize}
  \item Template Coq --- adds meta-programming facilities to Coq. That is, it
    provides a way to \emph{quote} Coq definitions by producing an AST
    represented as an inductive data type \icode{term} in Coq, and
    \emph{unquote} a well-formed inhabitant of \emph{term} back to a Coq
    definition.
  \item PCUIC --- formalisation of the meta-theory of Polymorphic
    Cumulative Calculus of Inductive Constructions (PCUIC), an
    underlying calculus of Coq.\footnote{From now on, we will use
      MetaCoq to refer both to the quote/unquote functionality and to
      the formalisation of meta-theory.}
\end{itemize}

These features of MetaCoq have been used for defining various syntactic
translations from Calculus of Inductive Constructions (CIC) to itself
(e.g. parametricity translation~\cite{Anand:Parametricity}), developing a
certified compiler CertiCoq~\cite{Anand:CertiCoq} and for certifying extraction
of Coq terms to the untyped lambda-calculus~\cite{ForsterKunze:extraction}

Let us consider a simple example demonstrating the quote/unquote functionality.
\begin{lstlisting}
(* Quote *)
Quote Definition id_nat_syn := (fun x : nat => x).
Print id_nat_syn.
(* tLambda (nNamed "x")
    (tInd (mkInd "nat" 0) []) (Ast.tRel 0) : term *)

(* Unquote *)
Make Definition plus_one :=
  (tLambda (nNamed "x") (tInd (mkInd "nat"  0 ) [])
           (tApp (tConstruct (mkInd "nat" 0) 1 []) [tRel 0])).
Print plus_one.
(* fun x : nat => S x : nat -> nat *)

\end{lstlisting}

Our use of MetaCoq explores a new way of using meta-programming in
Coq. All existing use cases follow (roughly) the following procedure:
start with a Coq term, quote it and perform certain transformations
(e.g. syntactic translation, erasure, etc.). In our approach, we go in
the different direction: starting with the AST of a program in a
functional language we want to reason about, through a series of
transformations we produce a MetaCoq AST, which is then unquoted into
a program in Coq's Gallina language (shallow embedding). The
transformations include conversion from the named to the nameless
representation (if required) and translation into the MetaCoq AST. The
deep embedding also serves as input for developing meta-theory of the
functional language.

\subsection{The \acornkernel{} Language}
As an instance of our approach, we develop an embedding of the
``core'' of the \acorn{} smart contract language into Coq. We call
this core language \acornkernel{}. This language contains all the
essential features of a realistic functional language: System F type
system, inductive types, general recursion and pattern-matching. The
grammar of the language is given below.
\begin{eqnarray*}
  \oak{\tau}~\oak{\sigma} & ::= & {\oak{\hat{i}}} ~|~ \oak{I} ~|~ {\oak{\forall A.\tau}} ~|~ \oak{\tau~\sigma} ~|~{\oak{\tau \rightarrow\sigma}} \\
  \oak{p} & ::= & \oak{C~x_1~\ldots~x_n}\\
  \oak{e} & ::= & \oak{\overline{i}} ~|~{\oak{\lambda x : \tau.e}} ~|~ {\oak{\Lambda A.e}} ~|~
  {\oak{\<let>~x:~\tau~=~e_1~\<in>~e_2}} ~|~ \oak{e_1~e_2} \\
  && |~\oak{\<case>~e~:I~\tau_1 \ldots \tau_n~\oak{\<return>}~\sigma~\<of>}\\
    &&\;\;\oak{p_1\rightarrow e_1 ; \ldots ; p_m \rightarrow e_m}\\
  && |~ \oak{C_I}~|~ \oak{\<fix>~f~x : \tau_1 \rightarrow \tau_2 ~=~ e} ~|~ \oak{\tau}
\end{eqnarray*}
Here $\oak{I}$, $\oak{C}$, $\oak{A}$, $\oak{x}$ and $\oak{f}$ range over
strings representing names of inductive types, constructors, type variable
names, variable names and fixpoint names respectively. We use de Bruijn indices
to represent variables both in expressions and types (denoted
$\oak{\overline{i}}$ and $\oak{\hat{i}}$ respectively). Textual names $\oak{A}$, $\oak{x}$ and
$\oak{f}$ are only needed for decoration purposes making resulting Coq code
more readable. Note, that \acornkernel{} expressions are extensively annotated with typing
information. For instance, we annotate lambda abstractions with the domain type,
for fixpoints we store the types for the domain and for the codomain. Moreover,
$\<case>$-expressions require explicit type of branches.


\begin{figure*}
\begin{subfigure}[t]{.5\textwidth}
\begin{lstlisting}
  $\evaloak{\Sigma,\rho}{0}{e}$ $\defn$ $\<NotEnoughFuel>$
  $\evaloak{\Sigma,\rho}{S~n}{e}$ $\defn$ $\<match>~e~\<with>$
                   $\ldots$
                   | $\oak{\<fix>~f~x~:~\tau_1 \rightarrow \tau_2 ~=~ e} =>$
                       $\oak{\tau_1'} <- \evaltype{\oak{\tau_1}}$;
                       $\oak{\tau_2'} <- \evaltype{\oak{\tau_2}}$;
                       $\validate{\rho}{2}{\oak e}$;
                       Ok ($\vClosFix{\rho}{f}{x}{\oak{\tau_1'}}{\oak{\tau_2'}}{\oak e}$)
                   | $\oak{e_1~e_2} => v_2 <- \evaloak{\Sigma,\rho}{n}{\oak{e_2}}$;
                              $v_1 <- \evaloak{\Sigma,\rho}{n}{\oak{e_1}}$;
                              $\<match>~v_1~\<with>$
                              $\ldots$
                              | $\vClosFix{\rho}{f}{x}{\oak{\tau_1}}{\oak{\tau_2}}{\oak{e}} => $
                                 $\<if>~\mathtt{isConstr}(v_2)~\<then>~\evaloak{\Sigma,v_2::v_1::\rho}{n}{\oak e}$
                                                   $\<else>~\<EvalError>$
                              $\<end>$
\end{lstlisting}
\end{subfigure}
\begin{subfigure}[t]{.45\textwidth}
  \begin{lstlisting}
       | $\oak{\<case>~e~:I~\tau_1 \ldots \tau_k~\oak{\<return>}~\sigma~\<of>~\oak{\id{bs}}} =>$
                      $\validatebranches{\rho}{\oak{\id{bs}}}$;
                      $\_ <- \evaltype{\oak{\sigma}}$;
                      $\_ <- \mathtt{monad\_map}~\evaltypename~[\oak{\tau_1}; \ldots ;\oak{\tau_k}]$;
                      $v <- \evaloak{\Sigma,\rho}{n}{\oak{e}}$;
                      $\<match>~v~\<with>$
                      | $\vConstr{\oak{I'}}{\oak{C}}{\oak{\id{args}}} =>$
                          $\<let> (\_,tys) := \resolvector{\Sigma}{\oak{I'}}{\oak{C}}~\<in>$
                          $\<if>~(\oak{I}=\oak{I'})~\<then>~$
                              $\oak{e'} <- \matchpat{C}{n}{\oak{\id{tys}}}{\oak{\id{args}}}{\oak{\id{bs}}}$;
                              $\evaloak{\Sigma,\rev{\oak{\id{args}}} +\!\!+ \rho}{n}{\oak{e'}}$
                           $\<else>~\<EvalError>$
                      | _ => $\<EvalError>$
                       $\<end>$
     $\<end>$
  \end{lstlisting}
\end{subfigure}
  \caption{\acornkernel{} interpreter.}\label{fig:interpreter}
\end{figure*}

The semantics of \acornkernel{} is given as a definitional
interpreter~\cite{Reynolds1998}. This gives us an executable semantics
for the language. Moreover, since the core language we consider is
sufficiently close to \acorn{}, our interpreter can serve as a
\emph{reference} interpreter. The interpreter is implemented in an
environment-passing style and works both with named and nameless
representations of variables. Due to the potential non-termination, we
define our interpreter using the \emph{fuel idiom}: by structural
recursion on an additional argument (a natural number). The
interpreter function has the following type:\footnote{In our
  development, the interpreter supports two modes of evaluation: with
  named variables and with de Bruijn representation of variables
  (nameless). For the purposes of this paper, we will focus on the
  nameless mode.}
\begin{lstlisting}[basicstyle=\small]
eval : global_env -> nat -> env val -> expr -> res val
\end{lstlisting}
 We will use the notation $\evaloak{\Sigma,\rho}{n}{e}$ to mean \icode{eval
   $~\Sigma~$ n $~\rho~$ e}. The \icode{global_env} parameter provides mappings
 from names of inductives to their constructors. The next three
 parameters are: ``fuel'', an evaluation environment and a
 \acornkernel{} expression. Note that ``fuel'' has a different meaning
 than ``gas'' for smart contracts: ``fuel'' is used to limit the
 recursion depth to ensure termination and not as a measure of
 computational efforts. The resulting type is \icode{res val}, where
 \icode{res} is a error monad. The errors could be either
 $\<NotEnoughFuel>$ denoting that fuel provided was not sufficient to
 complete the execution, or $\<EvalError>$ denoting that execution is
 stuck. In our Coq development $\<EvalError>$ also carries a error
 message. Values \icode{val} are defined as follows:
\begin{eqnarray*}
    v & ::= & \vConstr{I}{C}{v_1 \ldots v_n}\\ &&|~ \vClosLam{\rho}{x}{\tau}{e}\\
  &&|~ \vClosFix{\rho}{f}{x}{\tau_1}{\tau_2}{e}\\
  && |~ \vTyClos{\rho}{A}{e} ~|~\vTy{\tau} \\
\end{eqnarray*}
Note that we annotate our value with types and variable names. This is required
to match the \acornkernel{} interpreter with the MetaCoq evaluation relation (see
Section~\ref{sec:soundness}).

The outline of the most interesting parts of our interpreter is given
in Figure~\ref{fig:interpreter}. Specifically, we show the evaluation
of fixpoints and $\<case>$-expressions. The rest is quite standard and
not essentially different from other works using definitional
interpreters (e.g.~\cite{Amin:SoundnessDefInterpreters}). An essential
part of the fixpoint evaluation is how we extend the evaluation
environment $\rho$: we evaluate the body of the fixpoint $\oak{e}$ in the
environment extended with the closure of the fixpoint itself which
corresponds to the recursive call. A fixpoint binds two variables: the
outermost one corresponds to the recursive call and the innermost is a
fixpoint's argument. Thus evaluating the fixpoint's body in the
environment $v_2::v_1::\rho$ ensures that once we hit a variable
corresponding to the recursive call it will be replaced with the body
of the fixpoint again. Note that we perform an additional check if
$v_2$ is a constructor of an inductive type (possibly applied to some
arguments). This is necessary to match the evaluation of
\acornkernel{} expressions with the MetaCoq evaluation
relation. Therefore, it is not possible to pass a function as an
argument to a \acornkernel{} fixpoint. Although this sounds limiting,
the main point of the \acornkernel{} semantics is to prove the correctness
of the embedding to Coq and fixpoints in Coq are limited to structurally
recursive definitions.

When evaluating $\<case>$ expressions, our interpreter first evaluates
all the types (parameters of the inductive and the type of branches),
then the discriminee, and if the latter evaluates to a constructor,
executes a simple pattern-matching algorithm. The $\matchpatname$
function returns a branch that matches the discriminee. Next, we
evaluate the body of the selected branch in the environment extended
with the reverse list of the constructor's arguments.

Note that our interpreter evaluates all the type expressions. This is required
since we want to match \acornkernel{} evaluation with the evaluation relation of MetaCoq,
which evaluates corresponding types. The type evaluation function $\evaltypename :
\type{env}~\type{val} \times \type{\oak{type}} -> \type{res}~\type{\oak{type}}$ essentially just
substitutes values from the evaluation environment and fails if there is no
corresponding value found.

Note that we perform some validation
of expressions against the evaluation environment. This is again required for
our soundness result. We will discuss these questions in
Section~\ref{sec:soundness}.

\subsection{Translation to MetaCoq}\label{sec:translation}
We define the translations from the AST of \acornkernel{} expressions
$\type{\oak{expr}}$ and types $\type{\oak{type}}$ to MetaCoq abstract
syntax $\type{\mcoq{term}}$ by structural recursion.\footnote{By
  MetaCoq $\type{\mcoq{term}}$ we mean a corresponding inductive type
  from the PCUIC formalisation. In our Coq development, before
  unquoting a PCUIC term, we translate it into a kernel
  representation. This translation is almost one-to-one, apart from
  the application case: it is unary in PCUIC and n-ary in the kernel
  representation, but this part is quite straightforward to
  handle. Eventually, this translation will be included in the MetaCoq
  project} On Figure~\ref{fig:translation} we outline the translation
functions. We use Haskell-like notation and blue colour for
\acornkernel{} expressions and types and Coq-like notation and green
colour for MetaCoq terms. We assume that the global environment $\Sigma :
\type{global\_env}$ contains all inductive type definitions mentioned
in \acornkernel{} expressions. Under this assumption the translation
function $\expT{-}{\Sigma}$ is total.

In \acornkernel, we have two kinds of de Bruijn indices: the one for type variables
$\oak{\hat{i}}$ and the one for term variables $\oak{\overline{i}}$. In the
translation, we map both of them to the single kind of indices of MetaCoq.
Constructors are translated to MetaCoq constructors by first looking up the
corresponding constructor number in the global environment. In MetaCoq (and in
the kernel of Coq) constructors are represented as numbers corresponding to the
position of a constructor in the list of constructors for a given inductive
type. The type of global environments $\type{global\_env}$ is a list of
definitions of inductive type. The functions
{\small\begin{align*}
  \resolveindname ~:~ & \type{global\_env} \times \type{ident}\\
                     & \rightarrow \type{option}~(\type{list}~\type{constr})\\
  \resolvectorname ~:~ & \type{global\_env} \times \type{ident} \times \type{ident}\\
                      & \rightarrow \type{option}~(\mathbb{N} \times \type{constr})
\end{align*}}
\noindent
are used to look up for inductive type and their constructors. Particularly,
$\resolveindname$ returns a list of constructors for a given textual name of an
inductive type, while $\resolvectorname$ returns a position of a constructor in
the list of constructor definitions and the constructor definition
itself. Translation of a \acornkernel{} constructor $\oak{C_I}$ looks up the corresponding
constructor position in the list of constructors for the inductive $\oak{I}$,
a translated constructor in MetaCoq is basically a number (a position) annotated with the name of the
inductive type: $\mcoq{C_I}$ (and universes, but these are not relevant for us
right now). In the translation of $\oak{\<fix>}$, the type of a fixpoint is
translated into a $\Pi$-type in MetaCoq. Therefore, the indices of free variables in
the codomain type must be incremented (``lifted'') by 1, since $\Pi$-types bind a
variable. We denote such increments by $n$ as $\liftn{n}$. The other feature of the
fixpoint translation is that the body of a fixpoint in MetaCoq becomes a
lambda abstraction and since all lambda abstractions must be explicitly
annotated with a type of the domain, we provide this type. Again, we have to
lift free variables in the type, because the outermost variable index corresponds
to the body of the fixpoint itself.

By far the most complex translation case is the pattern-matching. The first
complication stems to the representation of branches for $\mcoq{\<match>}$ in
MetaCoq: the branches should be arranged in the same order as constructors in
the definition of the corresponding inductive type. In this case, there is no
need to store constructor names in each pattern. On the other hand, in \acornkernel{} we
choose more user-friendly implementation: patterns are explicitly named after
constructors and might follow in an order that is different from how the
order in the inductive type definition. For that reason, we first resolve the
inductive type from the global environment to get a list of constructors.  Then,
for each constructor in the list, we call the $\branchname$ function. As one can
see from Figure~\ref{fig:translation}, the translated branches follow the same
order as in they appear in the list of resolved constructors, i.e. $\oak{c_1 \ldots
  c_m}$.

The second difficulty arises from the pattern representation. In
MetaCoq, patterns are desugared to more basic building blocks: lambda
abstractions. Therefore, every pattern becomes an iterated lambda
term. Before we explain how the $\branchname$ function works, let us
first describe the representation of inductive types in the global
environment.  Each inductive type definition consists of a name,
number of parameters, and the list of constructors. In turn, each
constructor consists of a name and a list of argument types. Since
\acornkernel{}'s type system does not feature dependent types, it is
sufficient to store a list of arguments for each constructor instead of
a full type. Each type in the list of constructor arguments can refer
to parameters as if they were bound at the top level for each
type. For example, the type of finite maps (in the form of association
lists) would look like the following (we use concrete syntax here for
the presentation purposes):
\begin{lstlisting}[basicstyle=\footnotesize]
  data AcornMap $\#$2 = MNil [] | MCons [$\hat{1}$, $\hat{0}$, AcornMap $\hat{1}$ $\hat{0}$]
\end{lstlisting}
In the example above, \icode{AcornMap} has two parameters (the number
of parameters is specified after the type name): a type of keys, and a
type of values. The \icode{MNil} constructor does not take any
arguments, and \icode{MCons} takes a key, a value and an inhabitant of
\icode{AcornMap}. Given this representation of constructors, we
continue the explanation of the pattern-matching translation. When
pattern-matching on a parameterised inductive, the inductive is
applied to some parameters. In order to propagate these parameters to
the corresponding constructors, we have to substitute the concrete
parameters into each type in the constructor arguments list. We use
$\mcoqsubst{\id{t}}{\id{ts}}$ to denote the MetaCoq parallel substitution of a
list of terms $\id{ts}$ into a term $\id{t}$. In the $\branchname$
function, we first look up a branch body $\oak{e}$ in the list of
branches by comparing the constructor name to the pattern name. Next,
we project constructor argument types from a given
constructor. Further, since patterns become iterated lambdas in
MetaCoq, we need to provide a type for each abstracted variable. Since
Coq is dependently typed, variables bound by each lambda abstraction
might appear in the types that appear later in the term. Thus, to
avoid variable capture in types of lambda arguments, we need to lift
the translated types. This is exactly what happens in the
$\branchname$ function: the resulting MetaCoq term is an iterated
lambda abstraction and each argument type is lifted according to the
number of preceding lambda abstractions in the translation of a
pattern.

The translation of types is mostly straightforward. For the universal types, we
choose to produce a $\Pi$-type with the domain in $\mcoq{\<Set>}$. Such a choice
allows us to avoid dealing with universe levels explicitly, as it is required in
MetaCoq. The translation we present works only for a \emph{predicative} fragment
of \acornkernel{}, but \acorn's surface language supports only a prenex form of universal
types, where all the quantifiers appear at the topmost level. Therefore, this is
not a limitation from a practical point of view. Another thing to note is that
the translation of \emph{definitions} of inductive types is not shown in
Figure~\ref{fig:translation}, although it is implemented in our Coq
development. Instead of giving the full translation in the paper (which involves
subtle de Bruijn indices manipulations), let us consider an example. We continue
with the finite maps \icode{AcornMap}. The tricky bit in the translation is
to produce a correct type for each constructor considering the number of
parameters and taking into account that each $\Pi$-type binds a new
variable. Moreover, in MetaCoq the inductive type being defined becomes a
topmost variable as well.

\begin{figure*}
  \small{\begin{equation*}
    \begin{aligned}[t]
    \expT{-}{} & : \type{global\_env} \times \type{\oak{expr}} \rightarrow \type{\mcoq{term}}\\
    \expT{\oak{\overline{i}}}{\Sigma} & \defn  \mcoq{\overline{i}}\\
    \expT{\oak{\lambda x : \tau.e}}{\Sigma} & \defn \mcoq{\<fun>~ (x :}~\tyT{\oak{\tau}})~\mcoq{=>}~\expT{\oak{e}}{\Sigma}\\
    \expT{{\oak{\Lambda A.e}}}{\Sigma} & \defn {\mcoq{\<fun>~(A:\<Set>)~=>}}~\expT{\oak{e}}{\Sigma} \\
    \expT{\oak{\<let>~x:~\tau~=~e_1~\<in>~e_2}}{\Sigma} & \defn {\mcoq{\<let>~x:}}~\tyT{\oak{\tau}}~\mcoq{:=}~\expT{\oak{e_1}}{\Sigma}~\mcoq{\<in>}~\expT{\oak{e_2}}{\Sigma}\\
    \expT{\oak{e_1~e_2}}{\Sigma} & \defn \expT{\oak{e_1}}{\Sigma}~\expT{\oak{e_2}}{\Sigma}\\
    \expT{\oak{C_I}}{\Sigma} & \defn \<let> (\mcoq{C},\_) := \resolvector{\Sigma}{\oak{C}}{\oak{I}}\\
                         & \qquad \<in>~\mcoq{C}_\mcoq{I}\\
    \end{aligned}\qquad
    \begin{aligned}[t]
      \tyT{-}{} & : \type{\oak{type}} \rightarrow \type{\mcoq{term}}\\
      \tyT{\oak{\hat{i}}} &\defn \mcoq{\overline{i}}\\
      \tyT{\oak{I}} &\defn \mcoq{I}\\
      \tyT{\oak{\tau}} &\defn \tyT{\tau}\\
      \tyT{\oak{\forall A.\tau}} &\defn \mcoq{\<forall>~(A:\<Set>),}~\tyT{\oak \tau}\\
      \tyT{\oak{\tau_1~\tau_2}} & \defn \tyT{\oak{\tau_1}}~\tyT{\oak{\tau_2}}\\
      \tyT{\oak{\tau_1 \rightarrow \tau_2}} & \defn \mcoq{\<forall>}~(\mcoq{\_ :}\tyT{\oak{\tau_1}})\mcoq{,} ~\lift{\tyT{\oak{\tau_2}}}\\
  \end{aligned}
    \end{equation*}
    \begin{equation*}
      \expT{\oak{\<fix>~f~x : \tau_1 \rightarrow \tau_2 ~=~ e}}{\Sigma} \defn \mcoq{\<fix>}~(\mcoq{f}~\mcoq{:}~\mcoq{\<forall>}~(\mcoq{\_~:}~\tyT{\oak{\tau_1}})\mcoq{,}~\lift{\tyT{\oak{\tau_2}}})~\mcoq{:=}~
    \mcoq{\<fun>}~(\mcoq{x~:} \lift{\tyT{\oak{\tau_1}}})~\mcoq{=>}~\expT{e}{\Sigma}
    \end{equation*}
  \begin{flalign*}
      \left\llbracket\begin{array}{@{}ll@{}}
          \oak{\<case>}&\oak{e~:~I~\tau_1 \ldots \tau_n ~\<return>~\sigma~\<of>}\\
          & \oak{p_1\rightarrow e_1}\\
          & \ldots\\
          & \oak{p_m \rightarrow e_m}
        \end{array}\right\rrbracket_\Sigma &
      \defn
      \begin{array}{@{}ll@{}}
        \quad \mcoq{\<match>}&\expT{\oak e}{\Sigma}~\mcoq{\<as>}~\_~ \mcoq{\<in>~I~\tyT{\oak{\tau_1}} \ldots \tyT{\oak{\tau_n}}}~\mcoq{\<return>}~\lift{\tyT{\oak \sigma}}~\mcoq{\<with>}\\
        & \branch{\vect{~\oak{\tau_i}}}{\vect{\oak{p_i \rightarrow e_i}}}{\oak{c_1}}\\
        & \ldots\\
        & \branch{\vect{~\oak{\tau_i}}}{\vect{\oak{p_i \rightarrow e_i}}}{\oak{c_m}}
      \end{array}
  \end{flalign*}
  \smallskip
  \begin{flushleft}
    where
  \end{flushleft}
  \begin{equation*}
    \oak{c_1 \ldots c_m} \defn \resolveind{\Sigma}{\oak I} \qquad \vect{\oak{\tau_i}} \defn \oak{\tau_1}, \ldots,\oak{\tau_n} \qquad \tyT{\vect{~\oak{\tau_i}}} \defn \tyT{\oak{\tau_1}}, \ldots,\tyT{\oak{\tau_n}} \qquad \vect{\oak{p_i \rightarrow e_i}} \defn \oak{p_1 \rightarrow e_1}, \ldots, \oak{p_m \rightarrow e_m}
  \end{equation*}
  \begin{flalign*}
    \mathtt{branch}~:&~(\type{list}~\type{\oak{type}}) \times (\type{list}~(\type{\oak{pat}} \times \type{\oak{expr}})) \times \type{\oak{constr}} \rightarrow \type{\mcoq{term}}\\
      \branch{\vect{~\oak{\tau_i}}}{\vect{\oak{p_i \rightarrow e_i}}}{\oak c} \defn &\<let>~\oak{e}:=\find{\ctorname{\oak c}}{\vect{\oak{p_i \rightarrow e_i}}}~\<in>\\
      & \<let>~\oak{\sigma_1, \sigma_2, \ldots,\sigma_k}:=\ctorargs{\oak c}~\<in>\\
    &\mcoq{\<fun>}~\mcoq{(x_1:}~\mcoqsubst{\tyT{\oak{\sigma_1}}}{\tyT{\vect{~\oak{\tau_i}}}})~\mcoq{(x_2:}~\mcoqsubst{\tyT{\oak{\sigma_2}}}{\lift{\tyT{\vect{~\oak{\tau_i}}}}}) \ldots \mcoq{(x_k:}~\mcoqsubst{\tyT{\oak{\sigma_k}}}{\liftn{{k-1}}{\tyT{\vect{~\oak{\tau_i}}}}})~\mcoq{=>}~\expT{\oak{e}}{\Sigma}
  \end{flalign*}
  \caption{Translation to MetaCoq}\label{fig:translation}}
\end{figure*}

The resulting MetaCoq definition of \icode{AcornMap} (again, in the concrete syntax, but
with explicit indices in place of variable names) looks as follows:
\begin{lstlisting}[basicstyle=\footnotesize]
  Inductive AcornMap (A1 A2 : Set) :=
  | MNil : $\mcoq{\overline{2}}$ $\mcoq{\overline{1}}$ $\mcoq{\overline{0}}$
  | MCons : forall (_ : $\mcoq{\overline{1}}$) (_ : $\mcoq{\overline{1}}$) (_ : $\mcoq{\overline{4}}$ $\mcoq{\overline{3}}$ $\mcoq{\overline{2}}$), $\mcoq{\overline{5}}$ $\mcoq{\overline{4}}$ $\mcoq{\overline{3}}$,
\end{lstlisting}
Let us consider the type of \icode{MNil}. The index $\mcoq{\overline{2}}$ refers
to the topmost variable being the inductive type itself, index
$\mcoq{\overline{1}}$ refers to the parameter \icode{A1} and
$\mcoq{\overline{0}}$ to the parameter \icode{A2}.

Our Coq development contains the definition of \icode{AcornMap} using
the deep embedding as well as standard operations of finite
maps.\footnote{See \texttt{theories/examples/FinMap.v} file in our Coq development}
Moreover, we demonstrate how one can covert definitions, given as
a deep embedding, to regular Coq definitions by translating and unquoting
them. In the same time, one can run programs directly on deep
embedding using the interpreter (Figure~\ref{fig:interpreter}).

\subsection{Translation Soundness}\label{sec:soundness}
Since the development of the meta-theory of Coq itself is one of the
aims of MetaCoq we can use this development to show that the semantics
of \acornkernel{} agrees with its translation to MetaCoq (on
terminating programs). The idea is to compare the results of the
evaluation of \acornkernel{} expressions with the weak head
call-by-value evaluation relation of MetaCoq up to the appropriate
conversion of values. This conversion of values is a non-trivial
procedure: \acornkernel{} values contain \emph{closures}, while the
MetaCoq evaluation relation is substitution based and produces a
subset of \emph{terms} in the weak head normal form. Therefore, if we
want to eventually convert \acornkernel{} values to MetaCoq terms,
first, we need to substitute environments into the closures'
bodies. E.g. for $\vClosLam{\rho}{x}{\tau}{e}$ we need to substitute all the
values from $\rho$ into $e$. This is not possible to do directly, because
we cannot substitute values into expressions. Thus, we need to convert
all the values to expressions in the environment $\rho$. But this, in
turn, requires substituting environments in closures again. To break
this circle, we take inspiration from~\cite{Garrigue:CertifiedML} and
first define substitution functions purely on \acornkernel{}
expressions and types:
{\small\begin{align*}
\tau[-] : & ~ \type{env}~\type{\oak{expr}} -> \type{option}~\type{\oak{type}}\\
e[-] : & ~\type{env}~\type{\oak{expr}} -> \type{option}~\type{\oak{expr}}
\end{align*}}
These functions implement parallel substitution of the environment represented as
a list of expressions. Unfortunately, these functions are partial, due to the
fact that we use one environment for term-level values and for
type-level-values. We can make this function total by imposing a well-formedness
condition. With these substitution operations we now can define a conversion
procedure from \acornkernel{} values back to expressions:
{\small\begin{align*}
  \fromval{\vClosLam{\rho}{x}{\tau}{\oak{e}}} \defn & \<let>~\rho' :=\map{\fromvalname}{\rho}~\<in>\\
                                              & (\oak{\lambda x : \tau.e})[\rho']\\
  \fromval{\vClosFix{\rho}{f}{x}{\tau_1}{\tau_2}{\oak{e}}} \defn &\<let>~\rho' :=\map{\fromvalname}{\rho}~\<in>\\
                                                        & (\oak{\<fix>~f~x : \tau_1 \rightarrow \tau_2 ~=~ e})[\rho']\\
  \fromval{\vConstr{I}{C}{v_1 \ldots v_n}} \defn & C_I ~\fromval{v_1} \ldots \fromval{v_n}\\
  \fromval{\vTyClos{\rho}{A}{\oak{e}}} \defn & \<let> \rho' :=\map{\fromvalname}{\rho}~\<in>\\
  & ({\oak{\Lambda A.e}})[\rho']\\
  \fromval{\vTy{\oak{\tau}}} \defn & \oak{\tau}
\end{align*}}
Once we have a way of converting values to expressions, we can use the
translation function $\expT{-}{}$ to produce MetaCoq terms. This gives us a
direct way of comparing the evaluation results. Before we state the soundness
theorem, we give an overview of some important lemmas forming a core of the
proof. First of all, let us mention certain well-formedness conditions for the
environments involved in our definitions. It is very important to carefully set
up these conditions before approaching the soundness proof.
\begin{definition}\label{defn:wf}
  For a global environment $\Sigma:\type{global\_env}$, evaluation environment $\rho:\type{env}~\type{\oak{expr}}$ and \acornkernel{} value $v:\type{val}$ we say that
  \begin{enumerate}[label=(WF.\roman*),ref=(WF.\roman*)]
    \item\label{enum:wf:genv} $\Sigma$ is well-formed if for all definitions of inductive types, each
      constructor type is closed for the given number of parameters of the
      inductive type. E.g. if an inductive type has $n$ parameters, then the type of
      each constructor has at most $n$ free variables.
    \item\label{enum:wf:env} $\rho$ is well-formed wrt. an expression $e$ when for any type variables
      mentioned in $e$, if there is a corresponding expression in $\rho$ it corresponds to a type.
    \item\label{enum:wf:val} a value $v$ is well-formed if all the expressions and types in the
      closures are appropriately closed wrt. corresponding environments in
      closures and $\rho$ is well-formed in the sense of~\ref{enum:wf:env}.
      E.g. for $\vClosLam{\rho}{x}{\tau}{e}$ we have: $\rho$ contains only well-formed
      values, $e$ has at most $|\rho|+1$ free variables ($|\rho|$ is the size of $\rho$), $\tau$ is closed type value,
      and $\rho$ is well-formed wrt. $e$.
      \sloppy
      Additionally, for $\vConstr{I}{C}{\id{args}}$, we require that $\resolvector{\Sigma}{I}{C}$ returns some value.
  \end{enumerate}
\end{definition}

Now, we will state several lemmas crucial for the soundness proof. We
will emphasise the use of the conditions
\textit{\ref{enum:wf:genv}},\textit{\ref{enum:wf:env}} and
\textit{\ref{enum:wf:val}} throughout these lemmas.  We will use the
following additional notations: $\mcoqsubst{\id{t}}{\id{ts}}$ for the
MetaCoq parallel substitution as in the translation, $\expT{\rho}{\Sigma}$ for
translation of all the expressions in $\rho$ from \acornkernel{} to
MetaCoq and $|\rho|$ for the environment size.

\begin{lemma}[Environment substitution]\label{lem:env-subst}
  For any \acornkernel{} expression $e$, well-formed global environment $\Sigma$
  \ref{enum:wf:genv}, well-formed environment $\rho$ wrt. $e$ \ref{enum:wf:env}, such that
  all the expressions in $\rho$ are closed the following holds
  \[ \expT{e[\rho]}{\Sigma} = (\expT{e}{\Sigma})\big\{\expT{\rho}{\Sigma}\big\} \]
\end{lemma}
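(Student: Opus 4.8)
The plan is to prove the statement by structural induction on the \acornkernel{} expression $e$, carried out simultaneously with the companion statement for types, $\tyT{\tau[\rho]} = \mcoqsubst{\tyT{\tau}}{\expT{\rho}{\Sigma}}$, since the expression substitution $e[-]$ and the type substitution $\tau[-]$ are mutually defined (types occur in expressions through annotations and through the $\oak{\tau}$ expression form). Before the induction I would establish one auxiliary fact that is used in every binder case: translation commutes with lifting, i.e. $\expT{\liftn{n}{e}}{\Sigma} = \liftn{n}{\expT{e}{\Sigma}}$ and $\tyT{\liftn{n}{\tau}} = \liftn{n}{\tyT{\tau}}$. I would also record the standard MetaCoq fact, available from the PCUIC formalisation, that the parallel substitution $\mcoqsubst{-}{-}$ of a list of \emph{closed} terms is unaffected by lifting; this is precisely where the hypothesis that all expressions in $\rho$ are closed is consumed, since it guarantees that both the \acornkernel{} substitution and the MetaCoq substitution descend under binders in the same way, extending the substitution list with a fresh variable without having to shift the already-substituted terms.

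The base case is the de Bruijn variable $\oak{\overline{i}}$, and this is where the two substitution operations must be reconciled. On the \acornkernel{} side, $\oak{\overline{i}}[\rho]$ looks up the $i$-th entry of $\rho$ (or leaves the index untouched when out of range); on the MetaCoq side, $\mcoqsubst{\mcoq{\overline{i}}}{\expT{\rho}{\Sigma}}$ selects the $i$-th entry of the translated list $\expT{\rho}{\Sigma}$, and the two entries agree by definition of $\expT{\rho}{\Sigma}$ as the pointwise translation of $\rho$. The only subtlety is that \acornkernel{} has two kinds of indices, $\oak{\overline{i}}$ for terms and $\oak{\hat{i}}$ for types, both of which the translation collapses onto the single MetaCoq index space; here I would invoke well-formedness of $\rho$ with respect to $e$, condition \ref{enum:wf:env}, to know that whenever $e$ reads a type variable the matching entry of $\rho$ is a type, so that the expression- and type-level lookups produce compatible results. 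The remaining non-binder cases are immediate: application distributes over both substitution and translation, so the two induction hypotheses close it, and the constructor $\oak{C_I}$ is closed, so both sides reduce to the same resolved constructor regardless of $\rho$.

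The binder cases (lambda, type abstraction, let and fixpoint) all follow the same pattern: substitution pushes $\rho$ under the binder, the translation introduces a corresponding MetaCoq binder, and I would apply the induction hypothesis to the extended environment, appealing to the lifting-commutation lemma to line up the $\liftn{n}{-}$ occurrences that the translation inserts --- for instance the lift on the codomain type in the fixpoint translation, and the two lifts corresponding to the recursive name and the argument bound by a fixpoint. The type annotations on lambdas and lets, together with the domain and codomain types of fixpoints, are discharged by the companion type statement.

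The main obstacle is the $\oak{\<case>}$ case, because of the $\branchname$ function. Here the translation desugars each branch into an iterated lambda whose $j$-th argument type is the translated constructor-argument type $\tyT{\oak{\sigma_j}}$ with the inductive's parameters substituted in and lifted by $j-1$, namely $\mcoqsubst{\tyT{\oak{\sigma_j}}}{\liftn{{j-1}}{\tyT{\vect{\oak{\tau_i}}}}}$. When the outer environment substitution $\mcoqsubst{-}{\expT{\rho}{\Sigma}}$ meets this nested parameter substitution, I have to show the two substitutions commute, which requires a substitution-composition lemma of the PCUIC library relating $\mcoqsubst{\mcoqsubst{t}{ts}}{us}$ to a single combined substitution, together with careful bookkeeping of the per-argument lifts and of the lifts applied to the parameters and to the return type $\sigma$. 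Well-formedness of the global environment \ref{enum:wf:genv} --- each constructor type is closed under the declared number of parameters --- is essential here: it bounds the free variables of $\oak{\sigma_j}$ so that the parameter substitution is total and the index arithmetic matches on both sides. Once the branch translation is shown to commute with $\mcoqsubst{-}{\expT{\rho}{\Sigma}}$, the discriminee and the scrutinee types are handled by the induction hypotheses and the companion type statement, and the case closes.
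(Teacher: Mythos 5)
Your plan is correct and follows the evident intended argument: the paper states this lemma without a textual proof (it is discharged in the accompanying Coq development), and the natural proof is precisely your structural induction on $e$ together with the companion statement for types, using commutation of the translation with lifting, the closedness of the entries of $\rho$ to avoid shifting when descending under binders, condition \ref{enum:wf:env} to reconcile the two \acornkernel{} index spaces with MetaCoq's single one in the variable case, and condition \ref{enum:wf:genv} to control the nested parameter substitution inside $\branchname$ for the $\<case>$ case. You correctly single out the branch translation as the only genuinely delicate step and identify where each hypothesis of the lemma is consumed, so I see no gap.
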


Lemma~\ref{lem:env-subst} says that environment substitution commutes with the
translation. The well-formedness condition on $\rho$ ensures that the environment
substitution functions are total.

\begin{lemma}[Well-formed values]\label{lem:eval-val-ok}
  For any \acornkernel{} expression $e$, such that $e$ has at most $|\rho|$ free variables,
  number of steps $n$, well-formed evaluation environment $\rho$, such that all the
  values in $\rho$ are well-formed, if evaluation of $e$ terminates with some value
  $v$, i.e. $\evaloak{\Sigma,\rho}{n}{e} = \<Ok>~v$, then the value $v$ is well-formed (\ref{enum:wf:val}).
\end{lemma}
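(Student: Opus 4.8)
The plan is to prove Lemma~\ref{lem:eval-val-ok} by induction on the fuel $n$, keeping the statement universally quantified over the expression $e$, the environment $\rho$, and the resulting value $v$, so that the induction hypothesis applies to every recursive call of the interpreter: all such calls are made at the predecessor fuel value $n$ when evaluating with fuel $S~n$. The base case $n = 0$ is immediate, since $\evaloak{\Sigma,\rho}{0}{e} = \<NotEnoughFuel>$ is never of the form $\<Ok>~v$ and the conclusion holds vacuously. For the inductive step I would proceed by case analysis on the syntactic shape of $e$, following the clauses of the interpreter in Figure~\ref{fig:interpreter}.

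The leaf and closure-forming cases are discharged directly against the definition of well-formedness~\ref{enum:wf:val}. For a variable the interpreter returns the value looked up in $\rho$, which is well-formed because every value in $\rho$ is well-formed by assumption. For $\oak{\lambda x : \tau.e'}$, $\oak{\Lambda A.e'}$ and $\oak{\<fix>~f~x~:~\tau_1\rightarrow\tau_2~=~e'}$ the interpreter builds a closure ($\vClosLam{\rho}{x}{\tau'}{e'}$ and its analogues) capturing the current $\rho$; here I would verify each clause of~\ref{enum:wf:val} by hand, using that the body admits one more free variable than $e$ (two more for $\oak{\<fix>}$) so its free-variable count matches $|\rho|+1$, that $\evaltypename$ yields closed type values, and that well-formedness of $\rho$ wrt.\ $e$ \ref{enum:wf:env} restricts to $\rho$ wrt.\ the body. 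For the constructor case $\oak{C_I}$ the interpreter first calls $\resolvectorname$; since the result is $\<Ok>~v$ rather than an error, this lookup must have succeeded, which is exactly the extra side condition on $\vConstr{I}{C}{\id{args}}$ in~\ref{enum:wf:val}.

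The genuinely recursive cases --- $\oak{\<let>}$, application, and $\oak{\<case>}$ --- are where the induction hypothesis is used and where the main difficulty lies. In each, a subterm is first evaluated to a value that the induction hypothesis certifies to be well-formed, and that value is then pushed onto the environment before a further recursive evaluation. For application the environment is extended by one entry (to $v_2::\rho'$ for a $\type{vClosLam}$) or by two (to $v_2::v_1::\rho'$ for a $\type{vClosFix}$, where $v_1$ is the fixpoint closure itself), and for $\oak{\<case>}$ it is extended by the reversed constructor arguments $\rev{\oak{\id{args}}}$ obtained from the well-formed $\type{vConstr}$. The obstacle is the bookkeeping needed to re-establish the preconditions of the induction hypothesis for these extended environments: I must show that the extended environment contains only well-formed values (which follows from the pushed values being well-formed, supplied by the induction hypothesis and by~\ref{enum:wf:val} for the closure), that the body to be evaluated has at most as many free variables as the new environment has entries (from the free-variable bounds stored in~\ref{enum:wf:val} for the closure, respectively from the branch selected by $\matchpatname$ carrying the arity of its constructor), and that the new environment is well-formed wrt.\ that body in the sense of~\ref{enum:wf:env}. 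Carefully tracking these de Bruijn index counts --- especially the two-slot extension in the fixpoint case and the reversal of constructor arguments in the $\oak{\<case>}$ case --- is the crux; once the preconditions are verified, the induction hypothesis yields well-formedness of the final value directly.
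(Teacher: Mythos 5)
The paper gives no written proof of this lemma (it is discharged in the Coq development), but the sentence immediately following it identifies the key mechanism: the dynamic checks $\validatename$ and $\validatebranchesname$ performed by the interpreter are what ensure condition \ref{enum:wf:val} for the values it produces. Your induction on the fuel $n$ with case analysis on $e$, and your identification of the environment-extension bookkeeping as the crux, match the structure the formal proof must have. The genuine gap is that you never use these checks. The well-formedness of a closure such as $\vClosFix{\rho}{f}{x}{\tau_1}{\tau_2}{e}$ includes clause \ref{enum:wf:env}: every type variable of the body that points into the captured environment must hit a \emph{type} value. In the untyped setting of this lemma that property is not derivable from the stated hypotheses --- the hypotheses only bound the number of free variables of $e$ and assert that the values in $\rho$ are well-formed; nothing forces a term-level slot of $\rho$ not to be used as a type inside the body. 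You try to obtain it by ``restricting'' a presumed well-formedness of $\rho$ wrt.\ $e$ to the body, but this both over-reads the hypotheses and, more importantly, fails exactly where you locate the difficulty: when the environment is extended to $v_2::v_1::\rho$ before evaluating a fixpoint body, the newly pushed $v_1$ (a closure) and $v_2$ (a constructor) are not type values, so \ref{enum:wf:env} for the extended environment can only hold if the body provably never uses indices $0$ and $1$ as type variables.

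That obligation is discharged by observing that evaluation returned $\<Ok>~v$ rather than $\<EvalError>$, hence the run-time check $\validate{\rho}{2}{\oak e}$ in the $\oak{\<fix>}$ clause (and $\validatebranchesname$ in the $\oak{\<case>}$ clause) must have succeeded; the proof then extracts \ref{enum:wf:env} for the closure's environment--body pair, and for the subsequently extended environments, from the success of these checks. Without invoking them your argument for the closure-forming cases and for re-establishing the induction hypothesis after environment extension does not go through. The remainder of your plan --- the vacuous base case, the variable and constructor cases, the free-variable counting for the one- and two-slot extensions and for $\rev{\oak{\id{args}}}$ --- is sound and consistent with the paper's setup.
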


In the interpreter in Figure~\ref{fig:interpreter} we perform certain dynamic
checks denoted by $\validatename$ and $\validatebranchesname$. These checks
ensure that the condition~\textit{\ref{enum:wf:val}} is satisfied for values
produces by the interpreter. For well-typed expressions, this condition would be
automatically satisfied, but currently, we focus on dynamic semantics.

We use the interpreter for \acornkernel{} expressions and call-by-value evaluation relation
of MetaCoq to state the soundness theorem. The MetaCoq evaluation relation is a
subrelation of the transitive reflexive closure of the one-step reduction relation
and designed to represent the evaluation of ML languages at Coq level.
\begin{theorem}[Soundness]\label{thm:soundness}
  For any \acornkernel{} expression $e$, number of steps $n$, well-formed global
  environment $\Sigma$, evaluation environment $\rho$, such that all the values in $\rho$
  are well-formed and $e[\rho]$ is closed, if $\evaloak{\Sigma,\rho}{n}{e} = \<Ok>~v$, for
  some value $v$, then $\expT{e[\rho]}{\Sigma} \Downarrow \expT{\fromval{v}}{\Sigma}$, where $- \Downarrow -$
  is the call-by-value evaluation relation of MetaCoq.
\end{theorem}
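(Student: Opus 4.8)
The plan is to prove the Soundness Theorem by strong induction on the fuel parameter $n$, with a nested case analysis on the structure of the expression $e$, mirroring the definition of the interpreter in Figure~\ref{fig:interpreter}. Since the interpreter is defined by structural recursion on the fuel, and recursive calls always decrease the fuel (except where the environment grows), an induction on $n$ gives us an induction hypothesis strong enough to handle the subexpressions. The two load-bearing lemmas are Lemma~\ref{lem:env-subst} (environment substitution commutes with translation) and Lemma~\ref{lem:eval-val-ok} (the interpreter only produces well-formed values); the latter is what lets us discharge the well-formedness side-conditions required by the former at every recursive step, and the former is what lets us relate the closure-based \acornkernel{} semantics to the substitution-based MetaCoq evaluation relation.

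First I would set up the invariant that travels through the induction: whenever $\evaloak{\Sigma,\rho}{n}{e} = \<Ok>~v$ with $\rho$ well-formed and $e[\rho]$ closed, we must show $\expT{e[\rho]}{\Sigma} \Downarrow \expT{\fromval{v}}{\Sigma}$. The base case $n=0$ is vacuous, since $\evaloak{\Sigma,\rho}{0}{e} = \<NotEnoughFuel> \neq \<Ok>~v$. For the inductive step at $S~n$, I would proceed by cases on $e$. The easy cases (variables, lambda and type-lambda abstractions) produce closures directly: here $\fromvalname$ simply applies the environment substitution, so the goal reduces to the observation that a translated $\lambda$ or $\Lambda$ is already a weak-head value under $\Downarrow$, combined with Lemma~\ref{lem:env-subst} to push the substitution through. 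The application case $\oak{e_1~e_2}$ is where the call-by-value structure of $\Downarrow$ must be matched: I would evaluate $e_2$ then $e_1$, invoke the induction hypothesis on each (fuel has strictly decreased to $n$), and then case-split on the resulting value $v_1$. For the lambda-closure subcase, the key is a substitution lemma saying that evaluating the body in the extended environment $v_2::\rho$ corresponds, after $\fromvalname$ and translation, to the $\beta$-redex contracted by $\Downarrow$; the $\mathtt{isConstr}$ guard on fixpoint application is precisely the condition that keeps the \acornkernel{} reduction in lockstep with MetaCoq's.

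The main obstacle, as the authors foreshadow, is the $\<case>$-expression and the $\<fix>$ cases, because these are where the translation to MetaCoq performs nontrivial de Bruijn lifting and the order of branches is permuted relative to the source syntax. For $\<case>$, after evaluating the discriminee to $\vConstr{I'}{C}{\id{args}}$, I would have to show that the branch selected by $\matchpatname$ in the interpreter corresponds to the branch that the MetaCoq $\mcoq{\<match>}$ reduction selects; this requires a lemma relating $\resolvectorname$ (which fixes the constructor's position) to the reordering performed by $\branchname$ in the translation, together with a careful accounting of how the reversed argument list $\rev{\oak{\id{args}}} +\!\!+ \rho$ aligns with the iterated $\beta$-reductions that MetaCoq performs on the iterated-lambda encoding of the branch. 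The well-formedness condition \ref{enum:wf:genv} on $\Sigma$ is essential here to guarantee that the constructor argument types are correctly closed so that the substitutions $\mcoqsubst{\tyT{\oak{\sigma_j}}}{\liftn{j-1}{\tyT{\vect{~\oak{\tau_i}}}}}$ line up. The $\<fix>$ case is the subtlest: the recursive call is realized in \acornkernel{} by evaluating the body in $v_2::v_1::\rho$ where $v_1$ is the fixpoint's own closure, whereas MetaCoq unfolds the $\mcoq{\<fix>}$ by substituting the fixpoint term for its recursive variable. I would need a dedicated lemma establishing that $\fromvalname$ applied to the fixpoint closure, after translation, equals exactly the fixpoint body that MetaCoq substitutes when it unfolds, i.e.\ that the environment-extension step and MetaCoq's fixpoint-unfolding step commute under $\expT{-}{\Sigma} \circ \fromvalname$. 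Establishing this commutation—reconciling the closure/environment representation with substitution-based unfolding in the presence of the lifting in the $\<fix>$ translation—is the crux of the entire proof, and I expect it to consume the bulk of the formalisation effort.
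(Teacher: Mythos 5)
Your proposal follows essentially the same route as the paper's proof: induction on the fuel $n$ (with the base case vacuous because zero fuel cannot yield $\<Ok>~v$), followed by case analysis on $e$, using Lemma~\ref{lem:env-subst} to commute environment substitution with the translation in the substitution-heavy cases and Lemma~\ref{lem:eval-val-ok} to discharge the well-formedness premises needed to apply the induction hypotheses. The additional detail you give on the $\<fix>$ and $\<case>$ cases is a faithful elaboration of difficulties the paper's proof sketch leaves implicit, not a divergence from it.
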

\begin{proof}
  By induction on the number of steps of the interpreter $n$. The base case is trivial,
  since we assume that the interpreter terminates. In the inductive step, the
  proof proceeds by case analysis on $e$ using Lemma~\ref{lem:env-subst} in the
  cases involving substitution (e.g. cases for let, application and
  $\<case>$-expressions) and using Lemma~\ref{lem:eval-val-ok} to obtain premises
  that all the values in $\rho$ are well-formed required for applying induction
  hypotheses.
\end{proof}

\begin{corollary}[Soundness for closed expressions]\label{thm:soundness-closed}
  For any closed \acornkernel{} expression $e$, number of steps $n$, well-formed global
  environment $\Sigma$, if $\evaloak{\Sigma,[]}{n}{e} = \<Ok>~v$,
  then $\expT{e}{\Sigma} \Downarrow \expT{\fromval{v}}{\Sigma}$.
\end{corollary}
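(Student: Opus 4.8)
The plan is to obtain the corollary as a direct instantiation of the Soundness theorem (Theorem~\ref{thm:soundness}) with the empty evaluation environment $\rho = []$. First I would verify that each hypothesis of Theorem~\ref{thm:soundness} is satisfied in this special case. The global environment $\Sigma$ is well-formed by assumption, exactly as required by~\ref{enum:wf:genv}. The premise that all values in $\rho$ be well-formed holds vacuously when $\rho = []$, since the empty environment contains no values to check. The only remaining side condition is that $e[\rho]$ be closed; here I would use that parallel substitution of the empty environment acts as the identity, so that $e[[]] = e$, and $e$ is closed by assumption.

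With the hypotheses discharged and the evaluation premise $\evaloak{\Sigma,[]}{n}{e} = \<Ok>~v$ supplied directly, Theorem~\ref{thm:soundness} yields $\expT{e[[]]}{\Sigma} \Downarrow \expT{\fromval{v}}{\Sigma}$. Rewriting $e[[]]$ to $e$ via the identity above then gives the desired conclusion $\expT{e}{\Sigma} \Downarrow \expT{\fromval{v}}{\Sigma}$. No separate induction on the step count $n$ is needed, since all of that work is already carried out inside the proof of the main theorem.

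The one nontrivial ingredient is the auxiliary fact that $e[[]] = e$. I expect this to be a short lemma proved by structural induction on $e$, relying on the analogous identity $\tau[[]] = \tau$ for the type annotations (itself a straightforward induction on $\tau$): every de Bruijn index is left untouched because there is no corresponding entry in the empty environment to substitute, and each compound case follows at once from the induction hypotheses. I would keep in mind that the substitution functions $e[-]$ and $\tau[-]$ are partial in general, but on the empty environment they are manifestly total, so no well-formedness obstruction arises and the identity holds unconditionally.
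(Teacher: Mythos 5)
Your proposal matches the paper's proof exactly: both instantiate Theorem~\ref{thm:soundness} at $\rho = []$, using that the empty environment is trivially well-formed and that substituting the empty environment leaves $e$ unchanged. Your extra remarks on proving $e[[]] = e$ by structural induction and on totality of substitution over the empty environment are a reasonable elaboration of details the paper leaves implicit.
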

\begin{proof}
  By Theorem~\ref{thm:soundness}, using the fact that the empty evaluation
  environment is trivially well-formed and the fact that substituting the empty
  environment does not change $e$.
\end{proof}

We can see our translation to MetaCoq as some form of denotational semantics. With
this view, we can obtain the adequacy result.

\begin{theorem}[Adequacy for terminating programs]\label{thm:adequacy}
  For any closed \acornkernel{} expression $e$, well-formed global environment $\Sigma$, if
  evaluation of $e$ terminates with value $v$ in $n$ steps and
  $\expT{e}{\Sigma} \Downarrow t$ for some term $t$, then $t = \expT{\fromval{v}}{\Sigma}$.
\end{theorem}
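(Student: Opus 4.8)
The plan is to recognise that adequacy follows almost immediately from the soundness result (Corollary~\ref{thm:soundness-closed}) together with the fact that the MetaCoq call-by-value evaluation relation $\Downarrow$ is deterministic. The interpreter already fixes a particular value $v$ as the result of evaluating $e$, and soundness tells us that $\expT{e}{\Sigma}$ evaluates to the translation of exactly that $v$; hence any term $t$ to which $\expT{e}{\Sigma}$ evaluates must coincide with $\expT{\fromval{v}}{\Sigma}$. In other words, adequacy is the ``backwards'' direction that soundness plus uniqueness of evaluation results hands us for free.

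Concretely, I would proceed in three steps. First, from the hypotheses that $e$ is closed, $\Sigma$ is well-formed, and $\evaloak{\Sigma,[]}{n}{e} = \<Ok>~v$, apply Corollary~\ref{thm:soundness-closed} to obtain $\expT{e}{\Sigma} \Downarrow \expT{\fromval{v}}{\Sigma}$. Second, keep the remaining hypothesis $\expT{e}{\Sigma} \Downarrow t$. We now have two derivations evaluating the single term $\expT{e}{\Sigma}$: one to $t$ and one to $\expT{\fromval{v}}{\Sigma}$. Third, invoke determinism of $\Downarrow$ to conclude $t = \expT{\fromval{v}}{\Sigma}$, which is exactly the goal.

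The only substantive ingredient, and therefore the main obstacle, is determinism of the MetaCoq evaluation relation: from $u \Downarrow t_1$ and $u \Downarrow t_2$ one must derive $t_1 = t_2$. This is a standard property of a weak head call-by-value evaluation relation — the reduction order is forced at every step, so there is no branching — and it should already be available in the MetaCoq formalisation of PCUIC, or otherwise be provable by a routine induction on one of the two evaluation derivations. I do not expect any interaction with the well-formedness conditions \ref{enum:wf:genv}--\ref{enum:wf:val} in this argument: those obligations are discharged entirely inside the appeal to Corollary~\ref{thm:soundness-closed}, whereas determinism is a purely syntactic property of $\Downarrow$ that is independent of how the evaluated term was produced.
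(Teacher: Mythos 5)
Your proposal is correct and matches the paper's proof exactly: apply Corollary~\ref{thm:soundness-closed} to the interpreter hypothesis and then conclude by determinism of the MetaCoq evaluation relation. Your remark that determinism is the one substantive external ingredient is also on point — the paper notes that its formal proof is still under development in the MetaCoq project.
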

\begin{proof}
  By Corollary~\ref{thm:soundness-closed}, using the fact that the MetaCoq
  evaluation relation is deterministic.
\end{proof}

Theorem~\ref{thm:adequacy} readily allows, for example, transferring
program equivalence from MetaCoq derivations to the \acornkernel{}
interpreter, provided that the values are not higher-order (i.e. not
closures).

In general, we conjecture that one can show adequacy for any program
for which there exists a derivation of the MetaCoq big-step evaluation
relation. Such a theorem should be stated for well-typed
\acornkernel{} expressions. Moreover, transferring properties proved
for Coq functions to the corresponding evaluations using the
interpreter in general also requires resorting to the typing
argument. Currently, we do not formalise static semantics and leave
these points as future work.



We assume that the \icode{unquote} functionality of MetaCoq is
implemented correctly. From that perspective, \icode{unquote} becomes
part of the trusted computing base, but we would like to emphasise
that one of the goals of the MetaCoq project is to implement the
actual kernel of Coq in Coq itself. The current MetaCoq data type
\icode{term} corresponds directly to the \icode{constr} data type from
Coq's kernel. Therefore, \icode{unquote} is a straightforward
one-to-one mapping of MetaCoq data types to the corresponding OCaml
data types of Coq's kernel. This is in contrast to projects like
hs-to-coq and coq-of-ocaml for which the whole
translation has to be trusted.

We developed a full formalisation of theorems and lemmas presented in
this section in our framework in Coq. We do not use any extra axioms
throughout our development, but Theorem~\ref{thm:adequacy} uses the
determinism of the MetaCoq evaluation relation and the proof of this
fact is currently under development in the MetaCoq project. Being able
to relate the semantics of \acornkernel{} to the semantics of Coq
through Coq's meta-theory formalisation gives strong guarantees that
our shallow embedding reflects the actual behaviour of
\acornkernel{}. The described approach provides a more principled way
of embedding of functional languages than the source-to-source based
approaches. Moreover, the translation involves manipulation of de
Bruijn indices, which is often quite hard to get right. Various
mistakes in non-trivial places were discovered and fixed in the
course of the formalisation.

\section{The Crowdfunding Contract}\label{sec:crowdfunding}
As an example of our approach, we consider verification of some
properties of a crowdfunding contract
(Figure~\ref{fig:crowdfunding}). Such a contract allows arbitrary
users to donate money within a deadline. If the crowdfunding goal is
reached, the owner can withdraw the total amount from the account
after the deadline has passed. Also, users can withdraw their
donations after the deadline if the goal has not been
reached. Contracts like this are standard applications of smart
contracts and appear in a number of tutorials.\footnote{The idea of a
  crowdfunding contract appears under different names: crowdsale,
  Kickstarter-like contract, ICO contract, etc. Many Ethereum-related
  resources contain variations of this idea in tutorials (including
  Solidity and Vyper documentation).  A simplified version of a
  crowdfunding contract is also available for Liquidity:
  \url{https://github.com/postables/Tezos-Developer-Resources/blob/master/Examples/Crowdfund/Basic.ml}}
We follow the example of Scilla~\cite{Sergey:Scilla} and adopt a
variant of a crowdfunding contract as a good instance to demonstrate
our verification techniques.

\lstset{basicstyle=\scriptsize}
\begin{figure*}
  \begin{lstlisting}[multicols=2]
(* Defining AST using customised notations                                *)
(* Brackets [\ \] delimit the scope of global definitions, *)
(*             [| |] the scope of programs,                               *)
(*             [! !] the scope of types                                   *)
(* Local state *)
Definition state_syn : global_dec :=
[\ record State :=
     mkState { balance : Money ;
               donations : Map;
               owner : Address;
               deadline : Nat;
               done : Bool;
               goal : Money } \].
Make Inductive (trans_global_dec state_syn).

(* Messages. Constructors do not take any arguments.*)
Definition msg_syn :=
  [\ data Msg =
         Donate [_]
       | GetFunds [_]
       | Claim [_] \].

Make Inductive (trans_global_dec msg_syn).

(* Abbreviations for types of the blockchain infrastructure *)
Notation SActionBody := "SimpleActionBody".
Notation SCtx := "SimpleContractCallContext".
Notation SChain := "SimpleChain".

(* An abbreviation for the return type *)
Notation "'Result'" := [! "prod" State ("list" "SimpleActionBody") !]
                         (in custom type at level 2).

(* Initialisation function *)
Definition crowdfunding_init : expr :=
[| \c : SCtx => \dl : Nat => \g : Money =>
  mkState 0z MNil dl (ctx_from c) False g |].

Make Definition init :=
  (expr_to_term Σ' (indexify nil crowdfunding_init)).

(* The main functionality *)
Definition crowdfunding : expr :=
[| \chain : SChain =>  \c : SCtx => \m : Msg => \s : State =>
    let bal : Money := balance s in
    let now : Nat := cur_time chain in
    let tx_amount : Money := amount c in
    let sender : Address := ctx_from c in
    let own : Address := owner s in
    let accs : Map := donations s in
    case m : Msg return Maybe Result of
       | GetFunds ->
        if (own == sender) && (deadline s < now) && (goal s <= bal) then
          Just (Pair (mkState 0z accs own (deadline s) True (goal s))
                       [Transfer bal sender])
        else Nothing : Maybe Result
      | Donate -> if now <= deadline s then
        (case (mfind accs sender) : Maybe Money return Maybe Result of
          | Just v ->
            let newmap : Map := madd sender (v + tx_amount) accs in
            Just (Pair (mkState (tx_amount + bal) newmap own
                                  (deadline s) (done s) (goal s)) Nil)
          | Nothing ->
            let newmap : Map := madd sender tx_amount accs in
            Just (Pair (mkState (tx_amount + bal) newmap own
                                  (deadline s) (done s) (goal s)) Nil))
          else Nothing : Maybe Result
      | Claim ->
        if (deadline s < now) && (bal < goal s) && ($\sim$ done s) then
        (case (mfind accs sender) : Maybe Money return Maybe Result of
         | Just v -> let newmap : Map := madd sender 0z accs in
           Just (Pair(mkState (bal$-$v) newmap own (deadline s) (done s) (goal s))
                       [Transfer v sender])
          | Nothing -> Nothing)
         else Nothing : Maybe Result |].


Make Definition receive :=
    (expr_to_term Σ' (indexify nil crowdfunding)).




  \end{lstlisting}
  \caption{The crowdfunding contract}\label{fig:crowdfunding}
\end{figure*}

\lstset{basicstyle=\footnotesize}

We extensively use a new feature of Coq called ``custom entries'' to provide a
convenient notation for our deep embedding.\footnote{Custom entries are
  available starting from Coq 8.9.0.} The program texts in
Figure~\ref{fig:crowdfunding} written inside the special brackets
\icode{[\\ ... \\]}, \icode{[| ... |]}, and \coqe{[! ... !]} are parsed
according to the custom notation rules. For example, without using notations the
definition of \icode{action_syn} looks as follows:
\begin{lstlisting}
gdInd Action 0 [("Transfer", [(nAnon, tyInd "nat");
                  (nAnon, tyInd "nat")]);("Empty", [])] false.
\end{lstlisting}
This AST might be printed directly from the smart contract AST by a
simple procedure (as we use in Section~\ref{sec:oak-stdlib}). We start by
defining the required data structures such as \icode{State} and \icode{Msg}
meaning contract state and messages accepted by this contract. We pre-generate
string constants for corresponding names of inductive types, constructors,
etc. using the MetaCoq template monad.\footnote{The template monad is a part of
  the MetaCoq infrastructure. It allows for interacting with Coq's global
  environment: reading data about existing definitions, adding new definitions,
  quoting/unquoting definitions, etc.} This allows for more readable
presentation using our notation mechanism. Currently, we use the \icode{nat}
type of Coq to represent account addresses and currency. Eventually, these types
will be replaced with corresponding formalisations of these primitive types. We
also use abbreviations for the result type and for certain types from the
blockchain infrastructure, which we are going to explain later.

The \icode{trans_global_dec : global_dec -> mutual_inductive_entry}\newline
function takes the syntax of the data type declarations and produces
an element of \icode{mutual_inductive_entry} --- a MetaCoq
representation for inductive types.  For each of our deeply embedded
data type definitions, we produce corresponding definitions of
inductive types in Coq by using the \icode{Make Inductive} command of
MetaCoq that ``unquotes'' given instances of the
\icode{mutual_inductive_entry} type. Similar notation mechanism is
used to write programs using the deep embedding. The definition of
\icode{crowdfunding} represents a syntax of the crowdfunding
contract. We translate the crowdfunding contract's AST into a MetaCoq
AST using the \icode{expr_to_term : global_env -> expr -> term}
function (corresponding to $\expT{-}{\Sigma}$ in
Figure~\ref{fig:translation}). Here, \icode{global_env} is a global
environment containing declarations of inductive types used in the
function definition, \icode{expr} is a type of \acornkernel{}
expressions, and \icode{term} is a type of MetaCoq terms.  Before
translating the \acornkernel{} expressions, we apply the
\icode{indexify} function that converts named variables into de Bruijn
indices. The result of these transformations is unquoted with the
\icode{Make Definition} command. After unquoting the translated
definitions, they are added to Coq's global environment and available
for using as any other definitions. The crowdfunding contract consists
of two functions:
\begin{lstlisting}
  init : SimpleContractCallContext -> nat -> Z -> State_coq

  receive : SimpleChain -> SimpleContractCallContext
            -> Msg_coq -> State_coq
            -> option (State_coq × list SimpleActionBody)
\end{lstlisting}
\noindent
Here \icode{SimpleChain} is a ``contract's view'' of a blockchain allowing for
accessing, among other parameters, current slot number (used as a current time);
\icode{SimpleContractCallContext} is a contract call context containing
transferred amount, sender's address and other information available for
inspection during the contract call. The type names with the ``coq'' suffix
correspond to the unquoted data types from the Figure~\ref{fig:crowdfunding}.

The \icode{init} function sets up an initial state for a given deadline and goal. The
\icode{receive} function corresponds to a transition from a current state of
the contract to a new state. We will provide more details about the execution
model in Section~\ref{sec:exec-framework}. In the current section we focus on
functional correctness properties using pre- and post-conditions. Similarly
to~\cite{Sergey:Scilla}, we prove a number of properties of the contract using
the shallow embedding:
\begin{enumerate}[label=(P.\roman*),ref=(P.\roman*)]
\item\label{enum:cfprop:sum} the contract \icode{receive} function preserves the following invariant (unless the ``done'' flag is set to \icode{true}):
  the sum of individual contributions is equal to the balance recorded in the
  contract's state;
\item the donations can be paid back to the backers if the goal is not reached within a deadline;
\item donations are recorded correctly in the contract's state;
\item backers cannot claim their contributions if the campaign has succeeded.
\end{enumerate}
The lemma corresponding to the property~\ref{enum:cfprop:sum} is given below.
\begin{lstlisting}
  Lemma contract_state_consistent BC CallCtx msg :
    {{ consistent_balance }}
      receive BC CallCtx msg
    {{ fun fin txs => consistent_balance fin }}.
\end{lstlisting}

In the example above we use the Hoare triple notation
\icode{\{\{P\}\}c\{\{Q\}\}} to state pre- and post-conditions for the
state before and after the contract call. The post-condition also
allows for stating properties of outgoing transactions. We define
\icode{consistent_balance} as follows:
\begin{lstlisting}
Definition consistent_balance (lstate : State_coq) :=
  $\sim$ lstate.(done_coq) ->
  sum_map (donations_coq lstate) = balance_coq lstate.
\end{lstlisting}
\noindent i.e. the contract balance is consistent if before the contract
is marked as ``done'' the sum of individual contributions is equal to the
balance.

Given the definitions above, one can read the lemma in the following way:
if the balance was consistent in some initial state, then execution of
the \icode{receive} method gives a new state in which the balance is again
consistent. Note that the \icode{receive} is a ``regular'' Coq
function and it is a shallow embedding of the corresponding
\icode{crowdfunding} definition (Figure \ref{fig:crowdfunding}) produced
automatically by our translation.

\section{Verifying Standard Library Functions}\label{sec:oak-stdlib}

In our Coq development, we show how one can verify \acorn{} library
code by proving \acornkernel{} functions (obtained by printing the
\acorn{} AST as \acornkernel{} AST) equivalent to the corresponding
functions from the standard library of Coq. In
particular, we provide an example of such a procedure for certain
functions on lists. The similar approach is mentioned as a strong side
of Coq in comparison to Liquid Haskell~\cite{Vazou:LiqHaskellCoq}. In
general, our framework will be applicable for verification of standard
libraries of various functional languages (not even necessarily
languages for smart contracts) since data types such as lists, trees,
finite maps, etc. are ubiquitous in functional programming. In
addition, \acornkernel{} is essentially a pure fragment of various
functional general-purpose languages (ML-family, Elm) and smart
contract languages (Liquidity, Simplicity, Sophia\footnote{A
  functional smart contract language based on ReasonML:
  https://dev.aepps.com/aepp-sdk-docs/Sophia.html}) making it a good
target for integration.

Figure~\ref{fig:oak-lists} shows how we obtain the shallow embedding from the
\acornkernel{} \texttt{List} module of the \acorn{} standard library. We start with the concrete
syntax (Figure~\ref{subfig:oak-code}). Next, from the \acorn{} parser, which is
a part of the Concordium infrastructure, we obtain an AST. This AST is printed
to obtain a Coq representation (which we call \acornkernel) using a simple
printing procedure implemented in Haskell (Figure~\ref{subfig:oak-ast}). From
the module AST in Coq we produce a shallow embedding by using the translation
described in Section~\ref{sec:translation} and the \icode{TemplateMonad} of
MetaCoq to unquote all the definitions in the module. We use
\icode{translateData} and \icode{translateDefs} function to translate and
unquote all the definitions of data types and functions respectively. The
\icode{translateDefs} is defined as follows:

\begin{lstlisting}
Fixpoint translateDefs (Σ : global_env) (es : list (string * expr))
  : TemplateMonad unit:=
  match es with
  | [] => tmPrint "Done."
  | (name, e) :: es' =>
    coq_expr <- tmEval all (expr_to_term Σ (reindexify 0 e)) ;;
    print_nf ("Unquoted: " ++ name);;
    tmMkDefinition name coq_expr;;
    translateDefs Σ es'
  end.
\end{lstlisting}
\noindent
The \icode{expr_to_term Σ} corresponds to $\expT{-}{\Sigma}$ from
Figure~\ref{fig:translation}. In the \acorn{} AST received from the parser, the
index spaces for type variables and term variables are separated. We merge them
into a single index space with the \icode{reindexify} function. Finally,
\icode{tmMkDefinition} unquotes the translated MetaCoq term and adds it to the
Coq environment. After that, we can interact with the unquoted definitions as if
they were written by hand.

On the shallow embedding we establish an isomorphism between \acorn{} lists and Coq
lists by defining two functions \icode{to_acorn} and \icode{from_acorn} composing to
identity. We can state the following: \icode{foldr f a l = fold_right f a
  (from_acorn l)} where \icode{foldr} is an \acorn{} function and \icode{fold_right}
comes from the standard library of Coq. Similarly for the list
concatenation. Now, we can transfer properties of these functions without the
need of reproving:
\begin{lstlisting}
Lemma foldr_concat (A B : Set) (f : A -> B -> B)
                      (l l' : AcornList A) (i : B) :
  foldr f i (concat l l') = foldr f (foldr f i l') l.
Proof. autorewrite with hints;apply fold_right_app. Qed.
\end{lstlisting}
Currently, we use \icode{autorewrite} to automate such proofs, but in the
future, we consider using more principled techniques
like~\cite{Tabareau:2018:EFU}.

\begin{figure*}
  \begin{subfigure}[b]{0.3\textwidth}
    \begin{lstlisting}
module ListBase where

import Prod
import Bool

data List a = Nil [] | Cons [a, (List a)]

definition foldr a b (f :: a -> b -> b)
   (initVal :: b) =
   letrec go (xs :: List a) :: b =
          case xs of
             Nil -> initVal
             Cons x xs' -> f x (go xs')
             in go
 $\ldots$
    \end{lstlisting}
    \subcaption{A fragment of \acorn{} code}\label{subfig:oak-code}
  \end{subfigure}
\begin{subfigure}[b]{0.3\textwidth}
  \begin{lstlisting}
Definition Data :=
[gdInd "List" 1 [("Nil_coq", []);
    ("Cons_coq", [(None, tyRel 0);
      (None, (tyApp (tyInd "List")
      (tyRel 0)))])] false].

Definition Functions :=
[("foldr", eTyLam "A" (eTyLam "A"
  (eLambda "x" (tyArr (tyRel 1)
  (tyArr (tyRel 0) (tyRel 0)))
  (eLambda "x" (tyRel 0)
  (eLetIn "f" (eFix "rec" "x" $\ldots$)))));
  $\ldots$
  \end{lstlisting}
  \subcaption{A fragment of \acornkernel{} AST (deep embedding)}\label{subfig:oak-ast}
\end{subfigure}
\begin{subfigure}[b]{0.3\textwidth}
  \begin{lstlisting}
Import AcornProd.
Import AcornBool.

Run TemplateProgram (translateData Data).
Definition gEnv := StdLib.Σ ++ Data ++ AcornBool.Data ++ AcornProd.Data.
Run TemplateProgram (translateDefs gEnv Functions).
Print foldr.
(* fun (A A0 : Set)(x : A -> A0 -> A0)
         (x0 : A0) =>
   fix rec (x1 : List A) : A0 :=
      match x1 with
      | @Nil_coq _ => x0
      | @Cons_coq _ x2 x3 =>
          x x2 (rec x3)
      end *)
  \end{lstlisting}
  \subcaption{Shallow embedding}\label{subfig:oak-shallow-embedding}
\end{subfigure}
\caption{Translating \acorn{} list functions to Coq}\label{fig:oak-lists}
\end{figure*}

\section{The Execution Framework}\label{sec:exec-framework}
In the context of blockchains smart contracts are small programs that are
published to the nodes of the system and associated with some address. Calls
happen when a transaction is made to this address and nodes execute the program
when seeing such a transaction, which additionally can contain input parameters
to the program. Smart contracts typically survive across calls and are thus
long-lived stateful objects that end up being executed multiple times. In
addition smart contracts interact with the blockchain in various ways, for
example by making calls to other smart contracts or by transferring money owned
by the smart contract into other accounts. The blockchain software thus keeps
track of extra information about the smart contract: its monetary balance and
the local state that the particular smart contract wishes to persist between
calls.

For these reasons it does not suffice to prove only functional correctness
properties if one wants to achieve strong guarantees. As an example, we would
like to prove that the crowdfunding contract in Figure~\ref{fig:crowdfunding}
has enough money when it attempts to pay back the funders. Here simple
functional correctness is uninteresting since the crowdfunding contract just
takes its own balance as an input to the function. Instead we would like a more
comprehensive model of a blockchain capturing the semantics of transactions that
affect state such as the balance. Such a model is given in \cite{Interactions}
which provides a Coq formalization of a small-step operational semantics of
blockchain execution. In this framework one can reason about multiple deployed
and interacting contracts with the system tracking the balance and local state
of each contract. This small-step semantics is used to define a trace type, and
we can specify our stronger safety properties as properties that hold for any
blockchain state reachable through a trace. Here we differentiate between
blockchain state, which is the full state of the entire blockchain, and local
state, which simply is the state that some particular contract wants the
blockchain to persist for it. Logically the local state of a smart contract is
one part of the blockchain state, but the blockchain state also contains
information like the balance of each account in the system. By using the traces
it is furthermore possible to define temporal properties.

As a running example, we continue with the crowdfunding contract from
Section~\ref{sec:crowdfunding}. We demonstrate how the
property~\ref{enum:cfprop:sum} (Section~\ref{sec:crowdfunding}) can be
transformed into a safety property and prove that the balance stored in the
internal state of the contract is always less or equal to the actual balance
recorded in the blockchain state.

In \cite{Interactions} contracts are represented as two functions \icode{init}
and \icode{receive}. The \icode{init} function is called when the contract is
deployed and allows the contract to establish its initial local state, while the
\icode{receive} function is called after deployment when transactions are sent
to the contract. These functions are provided with information about the
blockchain and the \icode{receive} function allows the contract to interact with
the blockchain more actively, such as by making transactions to other accounts.
The \icode{init} and \icode{receive} functions are partial allowing contracts to
communicate that they were called with invalid parameters.

To be able to state lemmas, we need to do some extra work by wrapping
the functions \icode{init} and \icode{receive} from
Figure~\ref{fig:crowdfunding} to have compatible signatures. In the
execution framework, these functions take \icode{Chain} and
\icode{ContractCallContext} types that are generalized over the type
of addresses used. For our particular contract we use ``plain''
inductive types \icode{SimpleChain} and
\icode{SimpleContractCallContext} for which addresses are always
natural numbers. Thus we instantiate the execution framework's
addresses as natural numbers and then define conversion functions
between the types from the execution framework and the simple variants
used by the crowdfunding contract. In effect the resulting functions
have the following types:
\begin{lstlisting}
 wrapped_init : Chain -> ContractCallContext
                    -> Setup
                    -> option State_coq

wrapped_receive : Chain -> ContractCallContext
                    -> State_coq -> option Msg_coq
                    -> option (State_coq * list ActionBody)
\end{lstlisting}
\noindent
The \icode{Setup} type here is just for packing together parameters to the init
function: deadline and goal. Another requirement for using the execution
framework is to provide instances for serialisation/\-deserialisation of the
local state (\icode{State_coq}) and messages (\icode{Msg_coq}). These instances
can be automatically generated by the execution framework for the simple
non-recursive data types used in the crowdfunding contract. With these instances
in place, we can put together \icode{init} and \icode{receive} to define a
contract:
\begin{lstlisting}
Definition cf_contract : Contract Setup Msg_coq State_coq :=
  build_contract wrapped_init init_proper wrapped_receive receive_proper.
\end{lstlisting}

Here \icode{init_proper} and \icode{receive_proper} are proofs showing that
the wrapped functions respect extensional equality on the input parameters.

Now we are ready to formulate a safety property of the crowdfunding contract:
\begin{lstlisting}
Lemma cf_balance_consistent bstate cf_addr lstate :
  reachable bstate ->
  env_contracts bstate cf_addr = Some (cf_contract : WeakContract) ->
  cf_state bstate cf_addr = Some lstate ->
  consistent_balance lstate.
\end{lstlisting}

One can read this lemma as follows: for any reachable blockchain state
\icode{bstate}, for an instance of the crowdfunding contract deployed at the
address \icode{cf_addr} with some local state \icode{lstate}, the balance
recorded in the local state is consistent with the map of individual
contributions.

Next, we state and prove the following lemma:\footnote{The lemma
  presented in the paper is a corollary of a more general theorem
  which generalises over any outstanding actions, somewhat similarly
  to a theorem about the Congress contract in~\cite{Interactions}. We
  leave out the details here, but full proofs of all the lemmas from this
  section is available in our Coq development.}
\begin{lstlisting}
Lemma cf_backed_after_block {ChainBuilder : ChainBuilderType}
          prev hd acts new cf_addr lstate :
builder_add_block prev hd acts = Some new ->
env_contracts new cf_addr = Some (cf_contract : WeakContract) ->
cf_state new cf_addr = Some lstate ->
(account_balance (env_chain new) cf_addr>=balance_coq lstate)%Z.
\end{lstlisting}
\noindent
This lemma says that after adding a new block,\footnote{For our purpose a block
can be thought of as just a list of transactions.} for any instance of the
crowdfunding contract deployed at the address \icode{cf_addr}, the actual
contract balance (the one recorded in the blockchain state) is greater or equal
to the balance recorded in the local state (used in all operations of the
contract's ``logic''). In essence our integration with the execution framework
allows us to prove that the contract tracks its own balance correctly even in
the face of potential reentrancy or nontrivial interactions with other contracts
and accounts. This is unlike previous work such as~\cite{Sergey:Scilla} which
only considers a single contract in isolation, or~\cite{Mi-Cho-Coq}, which
focuses on functional correctness properties. This lemma in combination with
\icode{cf_balance_consistent} also gives a formal argument that the contract's
account has enough money to cover all individual contributions.

Having our shallow embedding integrated with the execution framework,
one can imagine many more important safety and temporal properties
of the contract. For example, if the contract is funded, there will ever
be at most one outgoing transaction (to the owner), or if the contract
is not funded and the deadline has passed users can get their
contributions back. We leave this for future work.

\section{Related Work}
In this work, we focus on modern smart contract languages based on a
functional programming paradigm. In many cases, various small errors
in smart contracts can be ruled out by the type systems of these
languages. Capturing more serious errors requires employing such
techniques as deductive verification (for verification of concrete
contracts) and formalisation of meta-theory (e.g. to ensure the
soundness of type systems). The formalisation of the Simplicity
language~\cite{O'Connor:Simplicity} features well-developed
meta-theory, including a formalisation of the operational semantics of
the Bit Machine allowing for reasoning about computational
resources. But this formalisation does not focus on using the shallow
embedding for proving properties of smart contracts and Simplicity is
a low-level language in comparison to \acornkernel{} (Simplicity
is a non-Turing-complete first-order language and does not feature
algebraic data types).

The work on Scilla~\cite{Sergey:Scilla} focuses on verification of
concrete smart contracts in Coq. It considers a crowdfunding smart
contract example translated into Coq by hand, and the correspondence
to Scilla's meta-theory is not clear. A recent paper on
Scilla~\cite{Sergey:ScillaOOPSLA} gives a formal definition of the
language semantics, but does not feature mechanised proofs.

The formalisation of the Plutus Core
language~\cite{Chapman:PlutusCore} covers the meta-theory of System
$F_\omega^\mu$ --- a polymorphic lambda calculus with higher-order kinds and
iso-recursive types. The main difference with \acornkernel{} is the
absence of ``native'' inductive types.  Another work on
Plutus~\cite{Jones:UnravelingRecursion} shows how to compile inductive
types into System $F_\omega^\mu$. The compilation procedure is
type-preserving (which follows from the intrinsic encoding used in the
formalisation); computational soundness is left as future work.

The Michelson language formalisation~\cite{Mi-Cho-Coq} defines an
intrinsic encoding of the language expression along with its
interpreter in Coq. The Coq development uses a weakest precondition
calculus on deeply-embedded Michelson expressions for smart contract
verification.  The semantics of the Liquidity language given
in~\cite{Liquidity} provides the rules for compilation to Michelson,
but there is no corresponding formalisation in a proof assistant. The
Sophia smart contract language also belongs to the family of
functional smart contract languages based on ReasonML, but there is no
corresponding formalisation available.

There are several well-developed formalisations of variations of
System $F$
~\cite{Kovacs:SystemFw,Stucki:SystemFwIso,Garrigue:CertifiedML,Weirich:SystemFC}.
Among these,~\cite{Garrigue:CertifiedML} features an interpreter and
helped us to shape our representation of the language in Coq.

The purpose of the \icode{Program} tactic in
Coq~\cite{Parent:Program,Sozeau:SubsetCoercions} is to embed a
functional language into Coq allowing for writing specifications in
types. Our work can be seen as the first step towards making a
certified version of such a tactic.

Finally, meta-programming
techniques have also been shown to be useful in the dependently typed
setting in other proof assistants: Agda's Reflection
library~\cite{VanDerWalt:AgdaReflection}, meta-programming frameworks
in Lean~\cite{Ebner:LeanMetaprogramming} and
Idris~\cite{Christiansen:IdrisReflection} employ techniques similar to
MetaCoq.

\section{Conclusion and Future Work}
We have presented the ConCert smart contract verification
framework. An important feature of our approach is the ability to both
develop a meta-theory of a smart contract language and to conveniently
reason about particular programs (smart contracts). We proved
soundness theorems relating meta-theory of the smart contract language
with the embedding. Such an option is usually not available for
source-to-source translations. We applied our approach to the
development of an embedding of the \acornkernel{} smart contract
language and provided a verification example of a crowdfunding
contract starting from the contract's AST. We also demonstrated how
our framework can be used to verify ``standard library'' functions
common to functional programming languages by proving them equivalent
to Coq standard library functions. Moreover, we integrated the shallow
embedding with the smart contract execution framework which gives
access to properties related to the interaction of smart contracts with
the underlying blockchain and with each other. Our work addresses a
number of future work points from a recent Scilla
paper~\cite{Sergey:ScillaOOPSLA}: we provide a shallow embedding,
integrate it with a reasoning framework for safety and temporal
properties and implement a reference evaluator in Coq. To the best of
our knowledge, the ConCert framework together with the execution model
described in Section~\ref{sec:exec-framework} is the first development
allowing to verify functional smart contracts against a blockchain
execution model by automatic translation to the shallow embedding.

Our framework is general enough to be applied to other functional
smart contract languages. We consider benchmarking our development by
developing ``backends'' for translation of other languages
(e.g. Liquidity, Simplicity, etc.).  Extending the formalisation of
the \acornkernel{} language meta-theory is also among our goals for
the framework. An important bit of the meta-theory is the cost semantics
allowing for reasoning about ``gas''. We would like to give a cost
semantics for the deep embedding and explore how it can be extended on
the shallow embedding. Another line of future work is extending
\acorn{} with specification annotations similar to the Russell
language~\cite{Sozeau:SubsetCoercions} used for Coq's \icode{Program}
tactic. That would allow programmers to specify properties of smart
contract that become obligations in the resulting Coq translation.

\section*{Acknowledgments} We would like to thank the \oaklang/\acorn{} team for
discussions about the language.
\balance
\bibliographystyle{plain}
\bibliography{paper}

\clearpage

\lstset{basicstyle=\footnotesize}
\begin{appendices}
\section{A \icode{Counter} contract in \acorn{}}\label{appendix:counter}
\nobalance
We show how one can obtain the shallow embedding of a
simple contract in \acorn{} and verify some properties using our
framework.\footnote{This example is available in the \texttt{theories/examples/AcornExamples.v} file of our Coq development \url{https://github.com/AU-COBRA/ConCert/tree/artefact}}
First, we start with a \acorn{} program written in concrete syntax.
\begin{lstlisting}
module Counter where
import Prim
import Prod
import Blockchain
import Maybe

data Msg = Inc [Int64] | Dec [Int64]
data CState = CState [Int64, {address}]

definition owner (s :: CState) =
   case s of
     CState _ d -> d

definition balance (s :: CState) =
   case s of
     CState x _ -> x

definition count (c :: Blockchain.ReceiveContext) (s :: CState)
                 (callerOr :: Blockchain.Caller)
                 (amount :: Amount) (msg :: Maybe.Maybe Msg) =
  case msg of
    Maybe.Nothing -> Prod.Pair [CState] [Prim.Transaction] s Prim.TxNone
    Maybe.Just msg0 ->
      case msg0 of
        Inc a ->
          let newS =CState (Prim.plusInt64 (balance s) a) (owner s) in
          Prod.Pair [CState] [Prim.Transaction] newS Prim.TxNone
        Dec a ->
          let newS = CState (Prim.minusInt64 (balance s) a) (owner s)
          in Prod.Pair [CState] [Prim.Transaction] newS Prim.TxNone
\end{lstlisting}
\noindent
From the printing procedure implemented in Haskell that takes the AST of an \acorn{} program as input, we obtain a \acornkernel{} AST of data type definitions and functions corresponding to the \icode{Counter} module.
\begin{lstlisting}
  Definition CounterData := [gdInd "Msg" 0 [("Inc_coq", [(None, tyInd "Z")]); ("Dec_coq", [(None, tyInd "Z")])] false; gdInd "CState" 0 [("CState_coq", [(None, tyInd "Z");(None, tyInd "string")])] false].
\end{lstlisting}
\begin{lstlisting}
  Definition CounterFunctions := [ ("owner", eLambda "x" ((tyInd
    "CState")) (eCase ("CState", []) (tyInd "string") (eRel 0)
    [(pConstr "CState_coq" ["x0";"x1"], eRel 0)])) ;("balance",
    eLambda "x" ((tyInd "CState")) (eCase ("CState", []) (tyInd "Z")
    (eRel 0) [(pConstr "CState_coq" ["x0";"x1"], eRel 1)])); ("count",
    eLambda "x" ((tyInd "ReceiveContext")) (eLambda "x" ((tyInd
    "CState")) (eLambda "x" ((tyInd "Caller")) (eLambda "x" (tyInd
    "nat") (eLambda "x" ((tyApp (tyInd "Maybe") ((tyInd "Msg"))))
    (eCase ("Maybe", [(tyInd "Msg")]) ((tyApp (tyApp (tyInd "Pair")
    ((tyInd "CState"))) ((tyInd "Transaction")))) (eRel 0) [(pConstr
      "Nothing_coq" [], eApp (eApp (eApp (eApp (eConstr "Pair"
      "Pair_coq") (eTy ((tyInd "CState")))) (eTy ((tyInd
      "Transaction")))) (eRel 3)) (eConst "TxNone")); (pConstr
      "Just_coq" ["x0"], eCase ("Msg", []) ((tyApp (tyApp (tyInd
      "Pair") ((tyInd "CState"))) ((tyInd "Transaction")))) (eRel 0)
      [(pConstr "Inc_coq" ["x0"], eLetIn "x" (eApp (eApp (eConstr
        "CState" "CState_coq") (eApp (eApp (eConst "plusInt64") (eApp
        (eConst "balance") (eRel 5))) (eRel 0))) (eApp (eConst
        "owner") (eRel 5))) ((tyInd "CState")) (eApp (eApp (eApp (eApp
        (eConstr "Pair" "Pair_coq") (eTy ((tyInd "CState")))) (eTy
        ((tyInd "Transaction")))) (eRel 0)) (eConst "TxNone")));
        (pConstr "Dec_coq" ["x0"], eLetIn "x" (eApp (eApp (eConstr
        "CState" "CState_coq") (eApp (eApp (eConst "minusInt64") (eApp
        (eConst "balance") (eRel 5))) (eRel 0))) (eApp (eConst
        "owner") (eRel 5))) ((tyInd "CState")) (eApp (eApp (eApp (eApp
        (eConstr "Pair" "Pair_coq") (eTy ((tyInd "CState")))) (eTy
        ((tyInd "Transaction")))) (eRel 0)) (eConst
        "TxNone")))])]))))))].
\end{lstlisting}
\noindent
We assume that primitive data types are available for the translation:
\begin{lstlisting}
  Module Prim.
    Inductive Transaction := txNone.
    Definition plusInt64 := Z.add.
    Definition minusInt64 := Z.sub.
    Definition TxNone := txNone.
  End Prim.
\end{lstlisting}
\noindent
Next, we import \acorn{} ``standard library'' definitions before unquoting. These definitions were also produced by the same printing procedure.
\begin{lstlisting}
  Import Prim.
  Import AcornProd.
  Import AcornMaybe.
  Import AcornBlockchain.
\end{lstlisting}
\noindent
Now, we add all the data types, required for translating
\icode{count}, to the global environment.

\begin{lstlisting}
  Definition gEnv := ((StdLib.Σ ++ AcornMaybe.Data ++ AcornBlockchain.Data ++ AcornProd.Data ++ CounterData)%list).
\end{lstlisting}
We translate data types and the \icode{counter} contract to MetaCoq and unquote them using the template monad.
\begin{lstlisting}
  Run TemplateProgram (translateData CounterData).
  Run TemplateProgram (translateDefs gEnv CounterFunctions).
\end{lstlisting}
\noindent
After that, we can interact with the shallow embedding in a usual way.
\begin{lstlisting}
Print Msg.
(* Inductive Msg : Set :=  Inc_coq : Z -> Msg | Dec_coq : Z -> Msg *)

Print count.
(* count =
   fun _ x0 _ _ x3 =>
   match x3 with
    | @Nothing_coq _ => Pair_coq CState Transaction x0 TxNone
    | @Just_coq _ (Inc_coq x5) =>
       Pair_coq CState Transaction
        (CState_coq (plusInt64 (balance x0) x5) (owner x0)) TxNone
    | @Just_coq _ (Dec_coq x5) =>
       Pair_coq CState Transaction
        (CState_coq (minusInt64 (balance x0) x5) (owner x0)) TxNone
   end
      : ReceiveContext -> CState -> Caller
        -> nat -> Maybe Msg -> Pair CState Transaction *)
\end{lstlisting}
\noindent
Using the shallow embedding, we can show a simple functional correctness property:
\begin{lstlisting}
  Lemma inc_correct init n i fin tx ctx amount caller :
    (* precondition *)
    balance init = n ->
    (* sending "increment" *)
    count ctx init caller amount (Just_coq _ (Inc_coq i)) = Pair_coq _ _ fin tx ->
    (* result *)
    balance fin = n + i.
  Proof. intros Hinit Hrun. inversion Hrun. subst. reflexivity. Qed.
\end{lstlisting}
\end{appendices}

\end{document}